\providecommand{\U}[1]{\protect\rule{.1in}{.1in}}
\newtheorem{theorem}{Theorem}[section]
\newtheorem{lemma}[theorem]{Lemma}
\newtheorem{remark}[theorem]{Remark}
\newenvironment{proof}[1][Proof]{\noindent \textbf{#1.} }{\  \rule{0.5em}{0.5em}}
\begin{document}

\title{Selfish Cops and Passive Robber:\\Qualitative Games}
\author{Ath. Kehagias and G. Konstantinidis }
\date{\today}
\maketitle

\begin{abstract}
Several variants of the \emph{cops and robbers} \ (CR)\ game have been studied
in the past. In this paper we examine a novel variant, which is played between
two cops, each one \emph{independently} trying to catch a \textquotedblleft
passive robber\textquotedblright. \ We will call this the \emph{Selfish Cops
and Passive Robber} \ (\emph{SCPR}) game. \noindent In short, SCPR\ is a
stochastic two-player, zero-sum game where \emph{the opponents are the two cop
players}. We study sequential and concurrent versions of the SCPR\ game. For
both cases we prove the existence of value and optimal strategies and present
algorithms for the computation of these.

\end{abstract}

\section{Introduction\label{sec01}}

Several variants of the \emph{cops and robbers} \ (CR)\ game have been studied
in the past. In this paper we examine a novel variant, which is played between
two cops, each one \emph{independently} trying to catch a \textquotedblleft
passive robber\textquotedblright. \ We will call this the \emph{Selfish Cops
and Passive Robber} \ (\emph{SCPR}) game. Here is a brief and informal
description of the game; a more detailed description will be provided in later sections.

\begin{enumerate}
\item The game is played on an undirected, finite, simple and connected graph.

\item The game is played by two cop players $C_{1}$ and $C_{2}$, each
controlling a \emph{cop token} (the tokens will also be referred to as $C_{1}$
and $C_{2}$).

\item A \emph{robber token} $R$ is also used, which is controlled by Chance.

\item At every turn of the game the tokens are moved from vertex to vertex,
along the edges of the graph.

\item The winner is the first player whose token lies at the same vertex as
the robber token (i.e., the player who \textquotedblleft captures the
robber\textquotedblright).
\end{enumerate}

\noindent In short, SCPR\ is a game where \emph{the opponents are the two cop
players}. As far as we know this CR\ variant has not been previously studied.

The study of Cops and Robbers was initiated by Quillot \cite{quilliot1978jeux}
and Nowakowski and Winkler \cite{nowakowski1983vertex}. For the graph
theoretic point of the view, the reader can consult the recent book
\cite{nowakowskibonato} which contains a good overview of the extensive
literature. We will study SCPR from a somewhat different angle, using the
theory of stochastic games\footnote{I.e., a sequence of normal-form games
where the game played at any time depends probabilistically on the previous
game played and the actions of the agents in that game.}  as presented in,
e.g., the book by Filar and Vrieze \cite{filar1997competitive}. In the current
paper we study \emph{qualitative} SCPR\ games, i.e., games in which the
\emph{payoff} is the \emph{winning probability}; in a forthcoming paper we
will discuss \emph{quantitative} SCPR\ games, i.e., games in which the payoff
is the \emph{expected capture time}.

The paper is organized as follows. In Section \ref{sec02} we present
preliminary definitions and notation. In Section \ref{sec03}\ we study the
\emph{sequential} version of the SCPR\ game and in Section \ref{sec04}\ the
\emph{concurrent} version. In Section \ref{sec05}\ we discuss connections to
several other research areas (for instance \emph{recursive games},
\emph{graphical games }and \emph{reachability games}) and in Section
\ref{sec06}\ we present concluding remarks and discuss future research directions.

\section{Preliminaries\label{sec02}}

The SCPR\ game is played on an undirected, finite, simple and connected graph
$G=\left(  V,E\right)  $, where $V$ is the vertex set and $E$ is the edge set.
Unless otherwise stated, we will assume that the \emph{cop number}%
\footnote{I.e., the minimum number of cops required to guarantee capture of
the robber.} of the graph is\ $c\left(  G\right)  =1$.

The game proceeds in turns numbered by $t\in\mathbb{N}_{0}$ and, as already
mentioned, involves three tokens: $C_{1},C_{2}$ and $R$. These will also be
referred to as the first, second and third token, respectively, and their
locations \emph{at the end} of the $t$-th turn are indicated by $X_{t}^{1}$,
$X_{t}^{2}$, $X_{t}^{3}$. \ The \ \emph{starting position} at the $0$-th turn
is \emph{given}: for $i\in\left\{  1,2,3\right\}  $, $X_{0}^{i}=x_{0}^{i}\in
V$. In subsequent turns, the positions are changed according to the rules of
the particular variant (sequential or concurrent) and subject to the
constraint that movement always follows the graph edges: $X_{t+1}^{i}\in
N\left[  X_{t}^{i}\right]  $ (the \emph{closed} neighborhood of $X_{t}^{i}$).
As will be seen in the sequel, in the general case token moves are governed by
\emph{probabilistic strategies}; hence $X_{t}^{1}$, $X_{t}^{2}$, $X_{t}^{3}$
are random variables.

\subsection{Sequential SCPR}

In the \emph{sequential} \emph{ }version of SCPR players take turns in moving
their tokens. More specifically, on odd-numbered turns $C_{1}$ is moved by the
first cop player; on even-numbered turns first $C_{2}$ is moved by the second
cop player and then $R$ is moved by Chance. Consequently, for $t=2l+1$ we have
$X_{t}^{2}=X_{t-1}^{2}$ and $X_{t}^{3}=X_{t-1}^{3}$; for $t=2l$ we have
$X_{t}^{1}=X_{t-1}^{1}$. An additional sequence of variables $U_{0}$, $U_{1}$,
$U_{2}$, $U_{3}$, $...$ indicates the player to move in the next turn; i.e.,
$U_{0}=U_{2}=...=1$, $U_{1}=U_{3}=...=2$. We also define the vector
$S_{t}=\left(  X_{t}^{1},X_{t}^{2},X_{t}^{3},U_{t}\right)  $.

A \emph{game position} or \emph{state} is a vector $s=\left(  x^{1}%
,x^{2},x^{3},u\right)  $ where $x^{1},x^{2},x^{3}\ $are the positions of the
three tokens and $u\ $indicates which cop is about to play. For instance,
$s=\left(  2,3,5,1\right)  $ denotes the situation in which $C_{1}$, $C_{2}$,
$R$ are located at vertices 2, 3, 5, respectively, and $C_{1}$ will move in
the next turn. We define the following sets of states%
\[
\forall i\in\left\{  1,2\right\}  :\mathbf{S}_{i}=V\times V\times
V\times\left\{  i\right\}  .
\]
In other words, $\mathbf{S}_{1}$ (resp. $\mathbf{S}_{2}$) is the set of states
\textquotedblleft belonging\textquotedblright\ to the first (resp.
second)$\ $player. A $C_{1}$-\emph{capture state} is an $s=\left(  x^{1}%
,x^{2},x^{3},u\right)  $ such that $x^{1}=x^{3}$. A $C_{2}$-\emph{capture
state}\footnote{$C_{1}$ is slightly favored, since an $\left(  x,x,x,u\right)
$ state \ is considered a $C_{1}$ capture; however, because of symmetry,
reversing the definitions of $C_{i}$-captures would yield essentially the same
results.} is an $s=\left(  x^{1},x^{2},x^{3},u\right)  $ such that
$x^{2}=x^{3}$ and $x^{1}\neq x^{3}$. We will also use a \emph{terminal }state,
denoted by $\tau$; the behavior of the terminal state will be described in
detail a little later. At any rate, the full \emph{state space} of the
sequential SCPR\ game is%
\[
\mathbf{S}=\mathbf{S}_{1}\cup\mathbf{S}_{2}\cup\left\{  \tau\right\}  .
\]

The random variable $A_{t}^{i}$ denotes the \emph{move} (or \emph{action})
\ of the $i$-th token at time $t$. When the game state is $s$, the set of
moves available to the $i$-th token is denoted by $\mathbf{A}_{i}\left(
s\right)  $. For instance, when $s=\left(  x^{1},x^{2},x^{3},1\right)  $ we
have $\mathbf{A}_{1}\left(  s\right)  =N\left[  x^{1}\right]  $ (the
\emph{closed} neigborhood of $x^{1}$) and $\mathbf{A}_{2}\left(  s\right)
=\left\{  x^{2}\right\}  $. Similar things hold for states $s=\left(
x^{1},x^{2},x^{3},2\right)  $. For $s=\tau$ we have $\mathbf{A}_{i}\left(
s\right)  =\left\{  \lambda\right\}  $, where $\lambda$ is the \emph{null
move}. Legal moves result to \textquotedblleft normal\textquotedblright\ state
transitions; e.g., suppose the current state is $s=\left(  2,3,5,1\right)  $
and the next moves are $a^{1}=3$, $a^{2}=3$, $a^{3}=5$; then, assuming $3\in
N\left[  2\right]  $, the next state is $s^{\prime}=\left(  3,3,5,2\right)  $.
However, the terminal state $\tau$ raises the following exceptions.

\begin{enumerate}
\item If the current state $s$ is a $C_{i}$-capture state ($i\in\left\{
1,2\right\}  $), then the next state is $s^{\prime}=\tau$, irrespective of the
token moves. In other words, a capture state always transits to the terminal state.

\item If the current state $s$ is the terminal (i.e., $s=\tau$), then the next
state is $s^{\prime}=\tau$ irrespective of the token moves. In other words,
the terminal always transits to itself.
\end{enumerate}

A \emph{play} or \emph{infinite history} of the SCPR\ game is an infinite
sequence $s_{0}s_{1}s_{2}...s_{n}...$ of game states. The set of all infinite
histories is denoted by
\[
H^{\infty}=\left\{  s_{0}s_{1}s_{2}...s_{t}...:s_{t}\in S\text{ for }%
t\in\mathbb{N}_{0}\right\}  .
\]
A \emph{finite history} is a sequence $s_{0}s_{1}s_{2}...s_{n}$ of game
states; the set of all histories of length $n$ is denoted by
\[
H_{n}=\left\{  s_{0}s_{1}s_{2}...s_{n-1}:s_{t}\in S\text{ for }t\in\left\{
0,1,...,n-1\right\}  \right\}  ;
\]
the set of all finite histories is $H=\cup_{n=0}^{\infty}H_{n}$.

We have already mentioned that each cop player moves his respective token.
Rather than specifying each move separately, we assume (as is usual in Game
Theory) that before the game starts, each cop player selects a \emph{strategy}
which controls all subsequent moves. Despite the fact that there is no robber
player, we will assume that robber movement is also controlled by a
\textquotedblleft strategy\textquotedblright, which has been fixed before the
game starts and \emph{is known to the cop players}. Hence the $i$-th token
($i\in\left\{  1,2,3\right\}  $) is controlled by the strategy (conditional
probability function):%
\[
\sigma_{i}\left(  a|s_{0}s_{1}...s_{t}\right)  =\Pr\left(  A_{t+1}%
^{i}=a|\left(  X_{0}^{1},X_{0}^{2},X_{0}^{3},U_{0}\right)  =s_{0},...,\left(
X_{t}^{1},X_{t}^{2},X_{t}^{3},U_{t}\right)  =s_{t}\right)  .
\]
The above definition is sufficiently general to describe every possible manner
of move selection. We will only consider strategies which assign zero
probability to illegal moves. The following classes of strategies are of
particular interest.

\begin{enumerate}
\item A strategy $\sigma_{i}$ is called \emph{stationary Markovian} (or
\emph{positional}) iff $\sigma_{i}\left(  a|s_{0}s_{1}...s_{t}\right)
=\sigma_{i}\left(  a|s_{t}\right)  $; i.e., the probability of the next move
depends only on the current \emph{state} of the game.

\item A strategy $\sigma_{i}$ is called \emph{oblivious} iff it is stationary
Markovian and $\sigma_{i}\left(  a|\left(  y^{1},y^{2},y^{3},u\right)
\right)  =\sigma_{i}\left(  a|y^{i},u\right)  $; i.e., the probability of the
next move of the token depends only on (i) the current \emph{location} of the
token and (ii)\ the active player.

\item A strategy $\sigma_{i}$ is called \emph{deterministic }iff, for every
$s_{0}s_{1}...s_{t}\in H$, $\sigma_{i}\left(  x|s_{0}s_{1}...s_{t}\right)
\in\left\{  0,1\right\}  $; i.e., for every history, the $i$-th token moves to
its next location deterministically.
\end{enumerate}

To simplify presentation, we will often use the following notation for
deterministic strategies. We define the \emph{deterministic strategy }to be a
function $\overline{\sigma}_{i}:H\rightarrow V$, defined  as follows: for
every finite history $s_{0}s_{1}...s_{t}$, $\overline{\sigma}_{i}\left(
s_{0}s_{1}...s_{t}\right)  =a$, where $a$ is the unique vertex such that
$\sigma_{i}\left(  a|s_{0}s_{1}...s_{t}\right)  =1$. If $\sigma_{i}$ is
stationary Markovian, we write $\overline{\sigma}_{i}\left(  s_{t}\right)  =a$.

Suppose the game is in state $s$. Now $C_{1}$ plays $a^{1}$, $C_{2}$ plays
$a^{2}$ and $R$'s move $a^{3}$ is selected according to the (fixed)\ strategy
$\sigma_{3}$; hence the game will move into some new state $s^{\prime}\ $with
a certain probability depending on $a_{1},a_{2}$ and $\sigma_{3}$. We denote
this probability by $\Pr\left(  s^{\prime}|s,a_{1},a_{2}\right)
$\footnote{Note that in all subsequent notation, the dependence on the fixed
and known $\sigma_{3}$ is suppressed. Also, when a cop reaches the vertex
occupied by the robber we have a capture with probability one, irrespective of
the robber's move. }.

Payoff is defined as follows. In each turn of the game, $C_{1}$ receives an
\emph{immediate payoff} equal to
\begin{equation}
q\left(  s\right)  =\left\{
\begin{array}
[c]{rl}%
1 & \text{iff }s\text{ is a }C_{1}\text{-capture state}\\
0 & \text{otherwise}%
\end{array}
\right.  ; \label{eq03001}%
\end{equation}
$C_{2}$ receives $-q\left(  s\right)  $. Hence, a play $s_{0}s_{1}....$
results in (total) \emph{payoff}%
\begin{equation}
Q\left(  s_{0}s_{1}....\right)  =\sum_{t=0}^{\infty}q\left(  s_{t}\right)
\label{eq03003}%
\end{equation}
for $C_{1}$ and $-Q\left(  s_{0}s_{1}....\right)  $ for $C_{2}$. Note that
both players have an incentive to capture $R$.

\begin{enumerate}
\item If $C_{1}$ captures the robber, he receives a total payoff of one
(comprising of immediate payoff of one for the capture turn and zero for all
other turns); otherwise his total payoff is zero.

\item $C_{2}$ \emph{never} receives positive payoff (even if he captures the
robber). However, we have assumed $c\left(  G\right)  =1$ and this implies
that a single cop can always catch the robber. Hence, if $C_{2}$ does not
capture $R$, $C_{1}$ will and thus $C_{2}$ will receive a negative payoff;
this provides the incentive for $C_{2}$ to capture $R$.
\end{enumerate}

Sequential SCPR\ is a \emph{stochastic zero sum game}
\cite{filar1997competitive}. Each player will try to maximize his
\emph{expected} payoff. Suppose the game starts at position $s_{0}$, $C_{i}$
moves according to strategy \ $\sigma_{i}$ (for $i\in\left\{  1,2\right\}  $)
and $R$ moves according to a fixed and known stratey $\sigma_{3}$. Every
triple $\left(  \sigma_{1},\sigma_{2},\sigma_{3}\right)  $ induces a
probability measure on $H^{\infty}$, the set of all infinite game histories.
Hence the \emph{expected} payoff to $C_{1}$ is%
\begin{equation}
J\left(  \sigma_{1},\sigma_{2}|s_{0}\right)  =\mathbb{E}\left(  \sum
_{t=0}^{\infty}q\left(  s_{t}\right)  |\left(  X_{0}^{1},X_{0}^{2},X_{0}%
^{3},U_{0}\right)  =s_{0}\right)  \label{eq03002}%
\end{equation}
and is well defined; $-J\left(  \sigma_{1},\sigma_{2}|s_{0}\right)  $ is the
expected payoff to $C_{2}$. It is easily seen that
\[
J\left(  \sigma_{1},\sigma_{2}|s_{0}\right)  =\Pr\left(
\text{\textquotedblleft}C_{1}\text{ wins\textquotedblright}%
|\text{\textquotedblleft the game starts at }s_{0}\text{ and, for }%
i\in\left\{  1,2\right\}  \text{, }C_{i}\text{ uses }\sigma_{i}%
\text{\textquotedblright}\right)  .
\]
We always have
\begin{equation}
\sup_{\sigma_{1}}\inf_{\sigma_{2}}J\left(  \sigma_{1},\sigma_{2}|s_{0}\right)
\leq\inf_{\sigma_{2}}\sup_{\sigma_{1}}J\left(  \sigma_{1},\sigma_{2}%
|s_{0}\right)  ; \label{eq03003a}%
\end{equation}
if the two sides of (\ref{eq03003a}) are equal, we define the \emph{value} of
the game (when started at $s_{0}$)\ to be%
\begin{equation}
v\left(  s_{0}\right)  =\sup_{\sigma_{1}}\inf_{\sigma_{2}}J\left(  \sigma
_{1},\sigma_{2}|s_{0}\right)  =\inf_{\sigma_{2}}\sup_{\sigma_{1}}J\left(
\sigma_{1},\sigma_{2}|s_{0}\right)  ; \label{eq03003b}%
\end{equation}
$\mathbf{v}$ will denote the vector of values for all starting states, i.e.,
$\mathbf{v=}\left(  v\left(  s\right)  \right)  _{s\in\mathbf{S}}$. Given some
$\varepsilon\geq0$, we say that:

\begin{enumerate}
\item a strategy $\sigma_{1}^{\#}$ is $\varepsilon$\emph{-optimal} (for
$C_{1}$) iff $\forall s_{0}:v\left(  s_{0}\right)  -\inf_{\sigma_{2}}J\left(
\sigma_{1}^{\#},\sigma_{2}|s_{0}\right)  \leq\varepsilon$;

\item a strategy $\sigma_{2}^{\#}$ is $\varepsilon$\emph{-optimal} (for
$C_{2}$) iff $\forall s_{0}:v\left(  s_{0}\right)  -\sup_{\sigma_{1}}J\left(
\sigma_{1},\sigma_{2}^{\#}|s_{0}\right)  \geq-\varepsilon.$
\end{enumerate}

\noindent A 0-optimal strategy is also called simply \emph{optimal}.

Finally, let $\Gamma$ be a \emph{matrix game}, i.e., a (one-turn) two-player,
zero-sum game with finite action set $\mathbf{A}_{i}$ for the $i$-th player
and the payoff to the first player being $\Gamma\left(  a^{1},a^{2}\right)  $
when $i$-th player plays $a^{i}\in\mathbf{A}_{i}$ (with $i\in\left\{
1,2\right\}  $). As is well known \cite{osborne1994game}, such a game always
has a \emph{value}, which we will denote by $\mathbf{Val}\left[  \Gamma\left(
a^{1},a^{2}\right)  \right]  $.

\subsection{Concurrent SCPR}

Most of the CR\ literature studies sequential versions of the CR\ game.
However, we have recently introduced a \emph{concurrent} version of the
classic CR\ \cite{kehagias2016}. Now we extend concurrency to the SCPR\ game.

The concurrent SCPR\ game differs from the sequential game in a basic
respect:\ in every turn the $C_{1}$, $C_{2}$, $R$ tokens are moved
\emph{simultaneously} (hence, when making his move, each player does not know
the other player's move; note that both of them know $R$'s next move, since
$\sigma_{3}$ is known in advance). Once again we will assume, unless otherwise
indicated, that $\widehat{c}\left(  G\right)  =1$ (note that a graph $G$ has
\emph{concurrent cop number} $\widehat{c}\left(  G\right)  =k$ iff it has
sequential, i.e., \textquotedblleft classic\textquotedblright, $\ $cop number
$c\left(  G\right)  =k$ \cite{kehagias2016}).

In addition, in concurrent SCPR we can have \textquotedblleft\emph{en-passant
capture}\textquotedblright, in which a cop and the robber start at opposite
ends of the same edge and move in opposite directions; in this case the robber
is \textquotedblleft swept\textquotedblright\ by the cop and moved into the
cop's destination.

With concurrent movement, game states are vectors $\left(  x^{1},x^{2}%
,x^{3}\right)  $ where $x^{i}\in V$ indicates (as previously) the position of
the $i$-th token; the $u$ variable is no longer necessary, since all tokens
are moved in every turn. Capture states now have the form $\left(  x^{1}%
,x^{2},x^{3}\right)  $ with either $x^{1}=x^{3}$ or $x^{2}=x^{3}$ (or both)
and the definition and behavior of the terminal state $\tau$ are the same as
previously. For the state space, we define
\[
\widehat{\mathbf{S}}_{a}=V\times V\times V,\quad\widehat{\mathbf{S}}%
=\widehat{\mathbf{S}}_{a}\cup\left\{  \tau\right\}
\]
and $\widehat{\mathbf{S}}$ is the full \emph{state space} of the of concurrent
SCPR game.

Regarding $\mathbf{A}_{i}\left(  s\right)  $ (the actions available to the
$i$-th player when the game state is $s$) we always have $\mathbf{A}%
_{i}\left(  \left(  x^{1},x^{2},x^{3}\right)  \right)  \in N\left[
x^{i}\right]  $. The definitions of (finite and infinite) histories and
strategies are the same as in the sequential case, except that we now use the
state space $\widehat{\mathbf{S}}$. The meaning of the sets $\widehat{H}_{n}$,
$\widehat{H}$, $\widehat{H}^{\infty}$ is analogous to that of $H_{n}$, $H$,
$H^{\infty}$. The strategies $\sigma_{i}$ ($i\in\left\{  1,2,3\right\}  $) are
defined in the same manner as in the sequential case (again, for deterministic
moves we introduce the deterministic strategy functions $\overline{\sigma}%
_{i}$).

Payoff of the concurrent SCPR\ game is defined in exactly the same manner as
in the sequential case. Again, concurrent SCPR\ is a stochastic zero sum game
and each player will try to maximize his expected payoff.

\section{Sequential SCPR\label{sec03}}

In this section we establish that sequential SCPR\ has a \emph{value} which
can be computed by \emph{value iteration}.

\begin{theorem}
\label{prop0301}Given some graph $G=\left(  V,E\right)  $. For every
$s\in\mathbf{S}_{1}\cup\mathbf{S}_{2}$, the sequential SCPR\ game starting at
$s$ has a value $v\left(  s\right)  $. The vector of values $\mathbf{v=}%
\left(  v\left(  s\right)  \right)  _{s\in\mathbf{S}}$ is the smallest
(componentwise) solution of the following \emph{optimality equations}:%
\begin{align}
&  v\left(  \tau\right)  =0;\quad\label{eq03006a}\\
\forall s=\left(  x^{1},x^{2},x^{3},1\right)  \in\mathbf{S}_{1}:\quad &
v\left(  s\right)  =\max_{a^{1}}\left[  q\left(  s\right)  +\sum_{s^{\prime
}\in\mathbf{S}}\Pr\left(  s^{\prime}|s,a^{1},x^{2}\right)  v\left(  s^{\prime
}\right)  \right]  ,\label{eq03006}\\
\forall s=\left(  x^{1},x^{2},x^{3},2\right)  \in\mathbf{S}_{2}:\quad &
v\left(  s\right)  =\min_{a^{2}}\left[  q\left(  s\right)  +\sum_{s^{\prime
}\in\mathbf{S}}\Pr\left(  s^{\prime}|s,x^{1},a^{2}\right)  v\left(  s^{\prime
}\right)  \right]  . \label{eq03007}%
\end{align}
Furthermore $C_{2}$ has a deterministic stationary Markovian optimal strategy
and, for every $\varepsilon>0$, $C_{1}$ has a deterministic stationary
Markovian $\varepsilon$-optimal strategy.
\end{theorem}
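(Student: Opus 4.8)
The plan is to read the right-hand sides of \eqref{eq03006a}--\eqref{eq03007} as a single \emph{Shapley-type operator} $T$ on value vectors and to obtain $\mathbf v$ as its least fixed point by \emph{value iteration}. Concretely, for $\mathbf u=(u(s))_{s\in\mathbf S}$ set $(T\mathbf u)(\tau)=0$, let $(T\mathbf u)(s)=\max_{a^1}\bigl[q(s)+\sum_{s'\in\mathbf S}\Pr(s'|s,a^1,x^2)\,u(s')\bigr]$ for $s=(x^1,x^2,x^3,1)\in\mathbf S_1$, and take the analogous $\min_{a^2}$ expression for $s\in\mathbf S_2$; then the optimality equations are exactly $\mathbf v=T\mathbf v$. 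First I would record three elementary facts: (i) $T$ is \emph{monotone}, since the weights $\Pr(\cdot)$ are nonnegative and $\max$/$\min$ preserve the componentwise order; (ii) $T$ maps $[0,1]^{\mathbf S}$ into itself, because $q\in\{0,1\}$, every capture state transits to $\tau$ (value $0$) and so receives value $q\le1$, while every non-capture state receives a sub-probability average of values in $[0,1]$; and (iii) $T$ is continuous, as each $\mathbf A_i(s)$ is finite so every $\max$/$\min$ is over finitely many affine functions of $\mathbf u$.

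Next I would run value iteration from $\mathbf v^{(0)}=\mathbf 0$ via $\mathbf v^{(n+1)}=T\mathbf v^{(n)}$. Since $\mathbf v^{(1)}=q\ge\mathbf 0=\mathbf v^{(0)}$, monotonicity makes the sequence increasing and bounded above by $\mathbf 1$, hence convergent to some $\mathbf v^{*}\in[0,1]^{\mathbf S}$; continuity gives $\mathbf v^{*}=T\mathbf v^{*}$, so $\mathbf v^*$ solves the optimality equations. Minimality then follows by the standard monotone induction once one observes that every solution $\mathbf w$ satisfies $\mathbf w\ge\mathbf 0$ (the capture states and $\tau$ pin the corresponding components, and a short argument excludes negative entries): from $\mathbf w\ge\mathbf v^{(0)}$ one gets $\mathbf w=T^{n}\mathbf w\ge T^{n}\mathbf v^{(0)}=\mathbf v^{(n)}$ for all $n$, whence $\mathbf w\ge\mathbf v^{*}$. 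The arithmetic content is that $\mathbf v^{(n)}(s)$ is precisely the value of the \emph{$n$-stage truncated game}, in which $C_1$ is credited only for a capture within the first $n$ turns; this game is finite and solved by backward induction, each turn being a one-sided optimization (a single $\max$ at $\mathbf S_1$ states or $\min$ at $\mathbf S_2$ states), so its value vector is $T^{n}\mathbf 0$.

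It remains to identify $\mathbf v^{*}$ with the game value and to extract strategies, which I would do by sandwiching between the two sides of \eqref{eq03003a}. For the \emph{upper} bound, let $\overline\sigma_2$ be the deterministic stationary strategy playing, at each $s\in\mathbf S_2$, an action attaining the $\min$ in \eqref{eq03007} against $\mathbf v^{*}$ (the minimizer exists because $\mathbf A_2(s)$ is finite). Against any $\sigma_1$, the process $M_t=\sum_{k<t}q(s_k)+v^{*}(s_t)$ is a supermartingale—at $\mathbf S_1$ states because any $C_1$ action gives value $\le\max=v^{*}$, at $\mathbf S_2$ states with equality by the choice of $\overline\sigma_2$—so bounded convergence yields $\sup_{\sigma_1}J(\sigma_1,\overline\sigma_2|s)\le v^{*}(s)$ and hence $\inf_{\sigma_2}\sup_{\sigma_1}J\le\mathbf v^{*}$; this simultaneously certifies $\overline\sigma_2$ as an \emph{optimal} deterministic stationary strategy for $C_2$. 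For the \emph{lower} bound I would fix $\varepsilon>0$, choose $n$ with $\mathbf v^{(n)}\ge\mathbf v^{*}-\varepsilon$, and convert the (non-stationary) $n$-stage optimal play of $C_1$ into a stationary rule.

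This last conversion is the step I expect to be the main obstacle, and it is exactly where the asymmetry between the players originates. Because $\mathbf v^{*}$ is the \emph{least} fixed point, its value is the probability of reaching a $C_1$-capture state in \emph{finite} time; the naive greedy rule (an $\arg\max$ of \eqref{eq03006} against $\mathbf v^{*}$) can cycle indefinitely among high-value non-capture states, "leaking" probability at infinity, and so need not be optimal. To recover $\varepsilon$-optimality with a \emph{stationary} strategy, I would break ties among $\arg\max$ actions so as to strictly decrease a potential measuring distance-to-capture, forcing a capture attempt within a bounded number of turns; a complementary submartingale estimate then gives $\inf_{\sigma_2}J(\overline\sigma_1,\sigma_2|s)\ge v^{*}(s)-\varepsilon$ for this deterministic stationary $\overline\sigma_1$, hence $\sup_{\sigma_1}\inf_{\sigma_2}J\ge\mathbf v^{*}$. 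Combined with \eqref{eq03003a} and the upper bound, all three quantities coincide with $\mathbf v^{*}$, establishing the value \eqref{eq03003b}, the optimality of $\overline\sigma_2$, and the $\varepsilon$-optimality of $\overline\sigma_1$. I would stress that the $\varepsilon$ is intrinsic: it is precisely the least-fixed-point (reachability) structure that permits the minimizer $C_2$ an exactly optimal stationary strategy while forcing the maximizer $C_1$ to settle for $\varepsilon$-optimal stationary play.
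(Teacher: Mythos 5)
Your route is genuinely different from the paper's. The authors prove almost nothing from scratch: they observe that sequential SCPR is a \emph{positive} zero-sum stochastic game and then invoke Filar--Vrieze wholesale --- Theorem 4.4.1 for existence of the value, Theorem 4.4.3 for the characterization as the smallest solution of the $\mathbf{Val}$-form optimality equations, Corollary 4.4.2 for $C_2$'s stationary optimal strategy, and Problem 4.16 for $C_1$'s stationary $\varepsilon$-optimal strategy; the only original content of their proof is the remark that at each state one player has a singleton action set, so each one-shot matrix game has a single row or column and $\mathbf{Val}$ collapses to $\max_{a^1}$ or $\min_{a^2}$, yielding \eqref{eq03006}--\eqref{eq03007}. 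You instead build the Shapley operator with the $\max/\min$ already in place and attempt a self-contained proof: monotone value iteration to the least fixed point, a supermartingale certificate for $C_2$'s deterministic stationary optimal strategy, and a sandwich via \eqref{eq03003a}. That architecture is sound, the identification of $\mathbf{v}^{(n)}$ with the $n$-stage truncated value is correct, and your supermartingale argument for the upper bound (hence for the optimality of the greedy $\overline{\sigma}_2$) genuinely works --- though you should use monotone rather than bounded convergence on the nonnegative partial sums, and note that the whole setup tacitly requires $\sigma_3$ to be Markovian so that $\Pr(s'|s,a^1,a^2)$ is state-determined (an assumption the paper also makes silently).

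There are, however, two places where you wave at precisely the content the paper outsources to its references. The minor one: your minimality argument needs every solution $\mathbf{w}$ of the optimality equations to satisfy $\mathbf{w}\geq\mathbf{0}$ before $\mathbf{w}=T^{n}\mathbf{w}\geq T^{n}\mathbf{0}$ can be run; ``a short argument excludes negative entries'' is not a proof, and the natural attempt (propagating the minimal level set of a putative negative solution) requires showing that no closed, capture-free region exists, which is itself a reachability argument. The serious one is the stationary $\varepsilon$-optimal strategy for $C_1$. You correctly diagnose that the greedy $\arg\max$ against $\mathbf{v}^{*}$ can cycle and leak probability at infinity, but ``break ties so as to strictly decrease a potential measuring distance-to-capture'' is a placeholder, not a construction: you never define the potential, never show that the $\arg\max$ set always contains a potential-decreasing action (it need not, if making progress toward capture requires temporarily leaving the $\arg\max$ set), and never prove the claimed submartingale estimate. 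This is exactly the statement of \cite[Problem 4.16]{filar1997competitive} (an Ornstein-type theorem for positive stochastic games), and it is the one nontrivial ingredient of the whole theorem; as written, your proof assumes it rather than proving it. Either cite such a result explicitly, as the paper does, or supply the construction --- for instance via the $n$-stage values $\mathbf{v}^{(n)}$, playing at each state an action that is optimal for the truncated game of an appropriately chosen horizon --- before claiming the lower bound $\sup_{\sigma_1}\inf_{\sigma_2}J\geq\mathbf{v}^{*}-\varepsilon$.
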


\begin{proof}
It is easily checked that, for every graph $G$ and every starting position
$s$, the sequential SCPR game is a \emph{positive} zero sum stochastic game.
Hence (by \cite[Theorem 4.4.1]{filar1997competitive}) it possesses a value
which (by \cite[Theorem 4.4.3]{filar1997competitive}) is the smallest
componentwise\ solution to the following system of \emph{optimality
equations}:%
\begin{equation}
v\left(  \tau\right)  =0;\quad\forall s\in\mathbf{S}_{1}\cup\mathbf{S}%
_{2}:v\left(  s\right)  =\mathbf{Val}\left[  q\left(  s\right)  +\sum
_{s^{\prime}\in S}\Pr\left(  s^{\prime}|s,a^{1},a^{2}\right)  v\left(
s^{\prime}\right)  \right]  .\label{eq0321}%
\end{equation}
However, in each turn of the sequential SCPR game, one of the players has a
single available action. For instance, when the state is $s=\left(
x^{1},x^{2},x^{3},1\right)  $, $C_{2}$'s action set can only be $a^{2}=x^{2}$.
Hence in (\ref{eq0321}) we are taking the value of an one-shot game with \ the
game matrix consisting of a single column. It follows that
\[
\forall s=\left(  x^{1},x^{2},x^{3},1\right)  :\mathbf{Val}\left[  q\left(
s\right)  +\sum_{s^{\prime}\in\mathbf{S}}\Pr\left(  s^{\prime}|s,a^{1}%
,a^{2}\right)  v\left(  s^{\prime}\right)  \right]  =\max_{a^{1}}\left[
q\left(  s\right)  +\sum_{s^{\prime}\in\mathbf{S}}\Pr\left(  s^{\prime
}|s,a^{1},x^{2}\right)  v\left(  s^{\prime}\right)  \right]
\]
which proves (\ref{eq03006}); (\ref{eq03007}) can be proved similarly.

The existence of stationary Markovian optimal strategy for $C_{2}$ follows
from \cite[Corollary 4.4.2]{filar1997competitive}. It is a deterministic
strategy because for each state $s\in\mathbf{S}_{2}$ the corresponding optimal
$C_{2}$ move is the one minimizing (\ref{eq03007}). Similarly, the existence
of a stationary Markovian $\varepsilon$-optimal strategy for $C_{1}$ follows
from \cite[Problem 4.16]{filar1997competitive}; the strategy is deterministic,
because for each state $s\in\mathbf{S}_{1}$ the corresponding optimal $C_{1}$
move is the one maximizing (\ref{eq03006}).
\end{proof}

For the computation of the solution to (\ref{eq03006})-(\ref{eq03007}) we have
the following.

\begin{theorem}
\label{prop0302}Given some graph $G=\left(  V,E\right)  $. Define
$\mathbf{v}^{\left(  0\right)  }$ by
\[
v^{\left(  0\right)  }\left(  \tau\right)  =0;\quad\forall s\in\mathbf{S}%
_{1}\cup\mathbf{S}_{2}:v^{\left(  0\right)  }\left(  s\right)  =q\left(
s\right)
\]
and $\mathbf{v}^{\left(  1\right)  },\mathbf{v}^{\left(  2\right)  },...$ by
the following recursion
\begin{align}
&  v^{\left(  i\right)  }\left(  \tau\right)  =0\label{eq03008}\\
\forall s=\left(  x^{1},x^{2},x^{3},1\right)  \in\mathbf{S}_{1}:  &  \quad
v^{\left(  i\right)  }\left(  s\right)  =\max_{a^{1}}\left[  q\left(
s\right)  +\sum_{s^{\prime}\in\mathbf{S}}\Pr\left(  s^{\prime}|s,a^{1}%
,x^{2}\right)  v^{\left(  i-1\right)  }\left(  s^{\prime}\right)  \right]  ,\\
\forall s=\left(  x^{1},x^{2},x^{3},2\right)  \in\mathbf{S}_{2}:  &  \quad
v^{\left(  i\right)  }\left(  s\right)  =\min_{a^{2}}\left[  q\left(
s\right)  +\sum_{s^{\prime}\in\mathbf{S}}\Pr\left(  s^{\prime}|s,x^{1}%
,a^{2}\right)  v^{\left(  i-1\right)  }\left(  s^{\prime}\right)  \right]  .
\label{eq03009}%
\end{align}
Then, for every $s\in\mathbf{S}_{1}\cup\mathbf{S}_{2}$, $\lim_{i\rightarrow
\infty}v^{\left(  i\right)  }\left(  s\right)  $ exists and equals $v\left(
s\right)  $, the value of the sequential SCPR\ game played on $G$, starting
from $s$.
\end{theorem}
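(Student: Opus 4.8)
The plan is to read the recursion (\ref{eq03008})--(\ref{eq03009}) as repeated application of a single monotone Bellman operator $T$ and then run a monotone-convergence argument, pinning down the limit through the minimality clause of Theorem \ref{prop0301}. Concretely, I would let $T$ act on a value vector $\mathbf{w}=\left(  w\left(  s\right)  \right)  _{s\in\mathbf{S}}$ by setting $(T\mathbf{w})\left(  \tau\right)  =0$ and
\begin{align*}
(T\mathbf{w})\left(  s\right)   &  =\max_{a^{1}}\left[  q\left(  s\right)
+\sum_{s^{\prime}\in\mathbf{S}}\Pr\left(  s^{\prime}|s,a^{1},x^{2}\right)
w\left(  s^{\prime}\right)  \right]  \quad\text{for }s=\left(  x^{1},x^{2}
,x^{3},1\right)  \in\mathbf{S}_{1},\\
(T\mathbf{w})\left(  s\right)   &  =\min_{a^{2}}\left[  q\left(  s\right)
+\sum_{s^{\prime}\in\mathbf{S}}\Pr\left(  s^{\prime}|s,x^{1},a^{2}\right)
w\left(  s^{\prime}\right)  \right]  \quad\text{for }s=\left(  x^{1},x^{2}
,x^{3},2\right)  \in\mathbf{S}_{2}.
\end{align*}
Then $\mathbf{v}^{\left(  i\right)  }=T\mathbf{v}^{\left(  i-1\right)  }=T^{i}\mathbf{v}^{\left(  0\right)  }$, and by Theorem \ref{prop0301} the value vector $\mathbf{v}$ is precisely the smallest solution of $T\mathbf{v}=\mathbf{v}$, i.e. of (\ref{eq03006a})--(\ref{eq03007}).

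First I would record two elementary properties of $T$. It is \emph{monotone}: if $\mathbf{w}\leq\mathbf{w}^{\prime}$ componentwise then $T\mathbf{w}\leq T\mathbf{w}^{\prime}$, since each $\Pr\left(  s^{\prime}|\cdots\right)  \geq0$ and both $\max$ and $\min$ preserve order. It is also \emph{continuous} on the finite-dimensional space of value vectors, being a finite combination of maxima and minima of affine maps. Next I would show the iterates form a nondecreasing sequence bounded above by $\mathbf{v}$. For the lower step $\mathbf{v}^{\left(  0\right)  }\leq\mathbf{v}^{\left(  1\right)  }$: since $v^{\left(  0\right)  }=q\geq0$, one application of $T$ gives $v^{\left(  1\right)  }\left(  s\right)  \geq q\left(  s\right)  =v^{\left(  0\right)  }\left(  s\right)  $ at every state (at capture states the two sides are equal because such states transit to $\tau$; elsewhere the summed term is nonnegative). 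Applying $T$ repeatedly and using monotonicity yields $\mathbf{v}^{\left(  i\right)  }\leq\mathbf{v}^{\left(  i+1\right)  }$ for all $i$. For the upper bound I would note $\mathbf{v}^{\left(  0\right)  }=q\leq\mathbf{v}$ (at a $C_{1}$-capture state $v\left(  s\right)  =1=q\left(  s\right)  $, and elsewhere $q\left(  s\right)  =0\leq v\left(  s\right)  $); since $T\mathbf{v}=\mathbf{v}$, monotonicity gives $\mathbf{v}^{\left(  i\right)  }=T^{i}\mathbf{v}^{\left(  0\right)  }\leq T^{i}\mathbf{v}=\mathbf{v}$.

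A nondecreasing sequence bounded above converges componentwise, so $\mathbf{v}^{\ast}:=\lim_{i}\mathbf{v}^{\left(  i\right)  }$ exists and satisfies $\mathbf{v}^{\ast}\leq\mathbf{v}$. By continuity of $T$, $T\mathbf{v}^{\ast}=\lim_{i}T\mathbf{v}^{\left(  i\right)  }=\lim_{i}\mathbf{v}^{\left(  i+1\right)  }=\mathbf{v}^{\ast}$, so $\mathbf{v}^{\ast}$ is a solution of the optimality equations (\ref{eq03006a})--(\ref{eq03007}). Since by Theorem \ref{prop0301} $\mathbf{v}$ is the \emph{smallest} such solution, $\mathbf{v}\leq\mathbf{v}^{\ast}$; combined with $\mathbf{v}^{\ast}\leq\mathbf{v}$ this forces $\mathbf{v}^{\ast}=\mathbf{v}$, i.e. $\lim_{i\rightarrow\infty}v^{\left(  i\right)  }\left(  s\right)  =v\left(  s\right)  $ for every $s$.

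The individual steps are routine; the crux is the two-sided squeeze. The monotone limit $\mathbf{v}^{\ast}$ is a priori only \emph{some} fixed point of $T$, so the argument genuinely needs (a) the upper bound $\mathbf{v}^{\left(  i\right)  }\leq\mathbf{v}$, which rests on $\mathbf{v}$ being a fixed point, and (b) the minimality clause of Theorem \ref{prop0301}, which is what excludes the limit landing on a larger solution; verifying that $\mathbf{v}^{\ast}$ is itself a fixed point (hence a legitimate solution) via continuity is the step I would state most carefully. I would also double check the capture-state boundary behaviour (the forced transition to $\tau$), since that is exactly where the base-case inequalities are tight. As a sanity check on the whole picture, one can note the probabilistic reading in which $v^{\left(  i\right)  }\left(  s\right)  $ is the value of the finite-horizon game counting only captures within the first $i$ turns, so that monotone convergence expresses ``capture within $i$ turns'' increasing to ``capture''; the operator formulation above, however, is cleaner to write out in full.
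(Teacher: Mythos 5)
Your proof is correct, but it takes a genuinely different route from the paper's. The paper's proof is essentially a citation: it observes that $\mathbf{v}$ is finite, invokes \cite[Theorem 4.4.4]{filar1997competitive} to get convergence of the $\mathbf{Val}$-based value iteration (\ref{eq03010}) for positive stochastic games, and then reuses the observation from the proof of Theorem \ref{prop0301} (one player always has a singleton action set, so the one-shot game matrix is a single row or column) to rewrite $\mathbf{Val}$ as a plain $\max$ or $\min$, which turns (\ref{eq03010}) into (\ref{eq03008})--(\ref{eq03009}). You instead give a self-contained monotone-operator argument: monotonicity and continuity of the Bellman operator $T$, the squeeze $\mathbf{v}^{\left(  0\right)  }\leq\mathbf{v}^{\left(  1\right)  }$ and $\mathbf{v}^{\left(  i\right)  }\leq\mathbf{v}$, the fixed-point property of the monotone limit, and the minimality clause of Theorem \ref{prop0301} to force the limit down onto $\mathbf{v}$. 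In effect you have reproved the cited textbook result in the special turn-based setting, where continuity of $T$ is elementary because no matrix-game values appear. All the individual steps check out (in particular $q\leq\mathbf{v}$, the equality at capture states via the forced transition to $\tau$, and the correct two-sided use of minimality). What the paper's version buys is brevity by deferring to the reference; what yours buys is a transparent, self-contained argument that makes explicit exactly where the ``smallest solution'' clause of Theorem \ref{prop0301} is needed --- a point the paper leaves implicit inside the citation.
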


\begin{proof}
Obviously, for all $s\in\mathbf{S}$, $v\left(  s\right)  \in\left[
0,1\right]  $. Hence $\mathbf{v}$ is a (componentwise)\ finite vector. Then
from \cite[Theorem 4.4.4]{filar1997competitive} we know that, defining
$\mathbf{v}^{\left(  0\right)  }$ by
\[
v^{\left(  0\right)  }\left(  \tau\right)  =0;\quad\forall s\in\mathbf{S}%
_{1}\cup\mathbf{S}_{2}:v^{\left(  0\right)  }\left(  s\right)  =q\left(
s\right)
\]
and $\mathbf{v}^{\left(  1\right)  },\mathbf{v}^{\left(  2\right)  },...$ by
the recursion
\begin{equation}
v^{\left(  i\right)  }\left(  \tau\right)  =0;\quad\forall s\in\mathbf{S}%
_{1}\cup\mathbf{S}_{2}:\quad v^{\left(  i\right)  }\left(  s\right)
=\mathbf{Val}\left[  q\left(  s\right)  +\sum_{s^{\prime}\in S}\Pr\left(
s^{\prime}|s,a_{1},a_{2}\right)  v^{\left(  i-1\right)  }\left(  s^{\prime
}\right)  \right]  , \label{eq03010}%
\end{equation}
we get $\lim_{i\rightarrow\infty}\mathbf{v}^{\left(  i\right)  }=\mathbf{v}$
(the value vector of Theorem \ref{prop0301}). The equivalence of
(\ref{eq03010}) to (\ref{eq03008})-(\ref{eq03009}) is established by the
argument used in the proof of Theorem \ref{prop0301}.
\end{proof}

\begin{remark}
\normalfont The significance of Theorem \ref{prop0301} is the following. Since
SCPR\ is a positive zero sum stochastic game, it will certainly have a value,
which satisfies the optimality equations (\ref{eq0321}); each equation of the
system (\ref{eq0321}) involves the value of a one-turn game. However, the
optimality equations can be expressed in the simpler form (\ref{eq03006}%
)-(\ref{eq03007}) which show that the values of the one-step games can be
computed by simple max and min operations.
\end{remark}

\begin{remark}
\normalfont Similar remarks can be made about Theorem \ref{prop0302}, where
the iteration (\ref{eq03008})-(\ref{eq03009}) is computationally simpler
(involves only max and min operations) than (\ref{eq03010}). Note the
similarity of (\ref{eq03008})-(\ref{eq03009}) to the algorithm of
\cite{hahn2006game} for determining the winner of a classic CR\ game. The
similarity becomes stronger in the case of deterministic $\sigma_{3}$. In this
case, $\Pr\left(  s^{\prime}|s,a^{1},x^{2}\right)  $ equals 1 for a single
$s^{\prime}=\mathbf{T}\left(  s,a^{1},x^{2}\right)  $ and $\Pr\left(
s^{\prime}|s,x^{1},a^{2}\right)  $ equals 1 for a single $s^{\prime
}=\mathbf{T}\left(  s,x^{1},a^{2}\right)  $; where $\mathbf{T}\left(
s,a^{1},a^{2}\right)  $ is the \emph{transition function} which yields the
next state when, from $s$, $C_{1}$ plays $a^{1}$ and $C_{2}$ plays $a^{2}$;
there is also a suppressed dependence on the move of $R$, which is
$\overline{\sigma}_{3}\left(  s\right)  $. Using this notation, (\ref{eq03008}%
)-(\ref{eq03009}) simplify to
\begin{align}
\forall s  &  =\left(  x^{1},x^{2},x^{3},1\right)  \in\mathbf{S}%
_{1}:v^{\left(  i\right)  }\left(  s\right)  =\max_{a^{1}}\left[  q\left(
s\right)  +v^{\left(  i-1\right)  }\left(  \mathbf{T}\left(  s,a^{1}%
,x^{2}\right)  \right)  \right]  ,\label{eq03008a}\\
\forall s  &  =\left(  x^{1},x^{2},x^{3},2\right)  \in\mathbf{S}%
_{2}:v^{\left(  i\right)  }\left(  s\right)  =\min_{a^{2}}\left[  q\left(
s\right)  +v^{\left(  i-1\right)  }\left(  \mathbf{T}\left(  s,x^{1}%
,a^{2}\right)  \right)  \right]  ; \label{eq03009a}%
\end{align}
these parallel closely the algorithm of \cite[p.2494]{hahn2006game}.
\end{remark}

\begin{remark}
\normalfont Finally, note that Theorems \ref{prop0301} and \ref{prop0302} hold
even when $c\left(  G\right)  >1$; the reason for which we have previously
required $c\left(  G\right)  =1$ has to do with the appropriateness of the
payoff function introduced in Section \ref{sec02}. In particular, when
$c\left(  G\right)  >1$ our argument about $C_{2}$'s incentive to capture $R$
does not hold necessarily (i.e., depending on $\sigma_{3}$, $C_{2}$ may ensure
payoff of 0 without ever capturing $R$); Theorems \ref{prop0301} and
\ref{prop0302} still hold true.
\end{remark}

\section{Concurrent SCPR\label{sec04}}

In this section we establish that concurrent SCPR\ has a value which can be
computed by value iteration. We first consider the case in which $R$ is
controlled by a general probability function $\sigma_{3}$ (\textquotedblleft
random robber\textquotedblright) and then examine in greater detail the case
in which $\sigma_{3}$ is oblivious deterministic (\textquotedblleft oblivious
deterministic robber\textquotedblright).

\subsection{Random Robber\label{sec0301}}

The two main results on concurrent SCPR are immediate consequences of the more
general results of \cite{filar1997competitive}.

\begin{theorem}
\label{prop04001}Given some graph $G=\left(  V,E\right)  $. For every
$s=\left(  x^{1},x^{2},x^{3}\right)  \in\widehat{\mathbf{S}}_{a}$, the
concurrent SCPR\ game starting at $s$ has a value $v\left(  s\right)  $. The
vector of values $\mathbf{v=}\left(  v\left(  s\right)  \right)
_{s\in\mathbf{S}}$ is the smallest (componentwise)\ solution of the following
\emph{optimality }equations%
\begin{equation}
v\left(  \tau\right)  =0;\quad\forall s\in\widehat{\mathbf{S}}_{a}:v\left(
s\right)  =\mathbf{Val}\left[  q\left(  s\right)  +\sum_{s^{\prime}\in S}%
\Pr\left(  s^{\prime}|s,a_{1},a_{2}\right)  v\left(  s^{\prime}\right)
\right]  .\label{eq03012}%
\end{equation}
Furthermore, $C_{2}$ has a stationary Markovian optimal strategy and, for
every $\varepsilon>0$, $C_{1}$ has a stationary Markovian $\varepsilon
$-optimal strategy.
\end{theorem}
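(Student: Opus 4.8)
The plan is to mirror exactly the structure of the proof of Theorem \ref{prop0301}, since the concurrent game is a genuine stochastic zero-sum game and the authors have already announced that the result is an immediate consequence of \cite{filar1997competitive}. First I would verify that concurrent SCPR, for every graph $G$ and every starting state $s\in\widehat{\mathbf{S}}_{a}$, is a \emph{positive} zero-sum stochastic game in the sense of \cite{filar1997competitive}: the immediate payoff $q(s)$ is nonnegative everywhere, the terminal state $\tau$ is absorbing with payoff $0$, and every play accumulates total payoff $Q\in\{0,1\}$ so that the expected payoff $J(\sigma_1,\sigma_2\mid s)$ is well defined and lies in $[0,1]$. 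This is the hypothesis required to invoke the existence-of-value machinery.

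Once the positivity is established, the existence of the value $v(s)$ follows from \cite[Theorem 4.4.1]{filar1997competitive}, and the characterization of $\mathbf{v}$ as the smallest componentwise solution of the optimality equations (\ref{eq03012}) follows from \cite[Theorem 4.4.3]{filar1997competitive}. The only structural difference from the sequential case is that here neither player has a forced single-column (or single-row) action set: in the concurrent game both $C_1$ and $C_2$ genuinely choose moves in $N[x^1]$ and $N[x^2]$ simultaneously, so the one-shot game inside $\mathbf{Val}[\,\cdot\,]$ is a full matrix game rather than a degenerate one. Consequently there is \emph{no} simplification of $\mathbf{Val}$ into a plain $\max$ or $\min$, which is precisely why the optimality equations remain in the matrix-game form (\ref{eq03012}) rather than splitting into (\ref{eq03006})--(\ref{eq03007}) as in the sequential case.

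For the strategy claims I would again appeal directly to the cited results: \cite[Corollary 4.4.2]{filar1997competitive} yields a stationary Markovian optimal strategy for the minimizer $C_2$, and \cite[Problem 4.16]{filar1997competitive} yields, for each $\varepsilon>0$, a stationary Markovian $\varepsilon$-optimal strategy for the maximizer $C_1$. Here, however, I would \emph{not} claim that these strategies are deterministic: because the stage games are honest matrix games, optimal strategies are optimal \emph{mixed} strategies obtained from $\mathbf{Val}$, so the determinism assertion present in Theorem \ref{prop0301} correctly drops out of the statement of Theorem \ref{prop04001}. This asymmetry is the one point where care is required, and I would flag it explicitly.

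The main obstacle, then, is not any deep argument but the verification that the concurrent model actually satisfies the formal definition of a positive stochastic game from \cite{filar1997competitive} — in particular that the ``en-passant capture'' rule and the simultaneous-move transition kernel $\Pr(s'\mid s,a_1,a_2)$ still define a legitimate absorbing/positive game with finite action sets $\mathbf{A}_i(s)\subseteq N[x^i]$. I expect this to be routine given the setup in Section \ref{sec02}, so the entire proof reduces to citing the four results of \cite{filar1997competitive} in sequence, after noting that the degeneracy argument of Theorem \ref{prop0301} is exactly what is absent here.
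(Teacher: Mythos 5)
Your proposal is correct and follows essentially the same route as the paper's own proof: verify that concurrent SCPR is a positive zero-sum stochastic game, then invoke \cite[Theorem 4.4.1]{filar1997competitive} for existence of the value, \cite[Theorem 4.4.3]{filar1997competitive} for the optimality equations, \cite[Corollary 4.4.2]{filar1997competitive} for $C_{2}$'s stationary Markovian optimal strategy, and \cite[Problem 4.16]{filar1997competitive} for $C_{1}$'s $\varepsilon$-optimal strategy. Your additional observation that the determinism claim must be dropped because the stage games are genuine matrix games is accurate and consistent with the statement.
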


\begin{proof}
For every graph $G$ (and every starting position $s$) SCPR is a positive
stochastic game. Hence (by \cite[Theorem 4.4.1]{filar1997competitive}) it
possesses a value which (by \cite[Theorem 4.4.3]{filar1997competitive})
satisfies the optimality equation (\ref{eq03012}). Furthermore $C_{2}$ has a
stationary Markovian optimal strategy by \cite[Corollary 4.4.2]%
{filar1997competitive} and, for every $\varepsilon>0$, $C_{1}$ has a
stationary Markovian $\varepsilon$-optimal strategy by \cite[Problem
4.16]{filar1997competitive}.
\end{proof}

\begin{theorem}
\label{prop04002}Given some graph $G=\left(  V,E\right)  $, let $s=\left(
x^{1},x^{2},x^{3}\right)  \in\widehat{\mathbf{S}}_{a}$. Define $\mathbf{v}%
^{\left(  0\right)  }$ by
\[
v^{\left(  0\right)  }\left(  \tau\right)  =0;\quad\forall s\in\widehat
{\mathbf{S}}_{a}:v^{\left(  0\right)  }\left(  s\right)  =q\left(  s\right)
\]
and $\mathbf{v}^{\left(  1\right)  },\mathbf{v}^{\left(  2\right)  },...$ by
the following recursion
\begin{equation}
v^{\left(  i\right)  }\left(  \tau\right)  =0;\quad\forall s\in\widehat
{\mathbf{S}}_{a}:v^{\left(  i\right)  }\left(  s\right)  =\mathbf{Val}\left[
q\left(  s\right)  +\sum_{s^{\prime}\in\mathbf{S}}\Pr\left(  s^{\prime
}|s,a_{1},a_{2}\right)  v^{\left(  i-1\right)  }\left(  s^{\prime}\right)
\right]  . \label{eq03011}%
\end{equation}
Then, for every $s\in\widehat{\mathbf{S}}_{a}$, $\lim_{i\rightarrow\infty
}v^{\left(  i\right)  }\left(  s\right)  $ exists and equals $v\left(
s\right)  $, the value of the concurrent SCPR\ game played on $G$, starting
from $s$.
\end{theorem}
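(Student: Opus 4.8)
The plan is to mirror the proof of Theorem~\ref{prop0302}, obtaining the result as a direct application of the value-iteration convergence theorem for positive stochastic games. The first step is to observe that, since the immediate payoff $q(s)$ takes values in $\{0,1\}$ and the total payoff equals the winning probability of $C_{1}$, we have $v(s)\in[0,1]$ for every $s\in\widehat{\mathbf{S}}$; hence the value vector $\mathbf{v}$ is componentwise finite. Next I would recall, from the proof of Theorem~\ref{prop04001}, that concurrent SCPR is a positive zero-sum stochastic game whose value vector $\mathbf{v}$ is the smallest componentwise solution of the optimality equations~(\ref{eq03012}).

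With these two facts in hand, the core step is to invoke \cite[Theorem 4.4.4]{filar1997competitive}, which guarantees that for a positive stochastic game with finite value vector the value-iteration sequence converges to $\mathbf{v}$. Concretely, starting from $\mathbf{v}^{(0)}$ defined by $v^{(0)}(\tau)=0$ and $v^{(0)}(s)=q(s)$, and iterating by applying the matrix-game value operator $\mathbf{Val}[\,\cdot\,]$ to the one-step game whose continuation values are given by $\mathbf{v}^{(i-1)}$, one obtains $\lim_{i\to\infty}\mathbf{v}^{(i)}=\mathbf{v}$. Since the recursion~(\ref{eq03011}) is exactly this value-iteration recursion --- note that, unlike the sequential case treated in Theorem~\ref{prop0302}, here both cops genuinely have several available actions, so no reduction of $\mathbf{Val}$ to a plain $\max$ or $\min$ is available or needed --- the limit equals $v(s)$ for every $s\in\widehat{\mathbf{S}}_{a}$, as claimed.

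The main thing to check is that the hypotheses of \cite[Theorem 4.4.4]{filar1997competitive} are genuinely met: that the game is positive (which rests on the payoff structure of Section~\ref{sec02}, already exploited in Theorem~\ref{prop04001}), that the value vector is finite (established in the first step), and that the initial vector $\mathbf{v}^{(0)}$ together with the per-iteration operator coincide with those in the cited theorem. I expect no essential difficulty here, since concurrent SCPR falls squarely within the class of positive stochastic games covered by \cite{filar1997competitive}; the argument is essentially a verbatim transcription of the reasoning behind Theorem~\ref{prop0302}, the only difference being that the $\mathbf{Val}$ operator is retained in full rather than collapsing to a max or min.
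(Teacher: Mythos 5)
Your proposal is correct and takes essentially the same route as the paper: the paper's proof consists of the single sentence that the result ``follows immediately from \cite[Theorem 4.4.4]{filar1997competitive}''. Your additional verification of the hypotheses (positivity, finiteness of $\mathbf{v}$ via $v(s)\in[0,1]$, and the observation that the $\mathbf{Val}$ operator is retained in full rather than collapsing to $\max$ or $\min$ as in Theorem \ref{prop0302}) is accurate and merely makes explicit what the paper leaves implicit.
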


\begin{proof}
This follows immediately from \cite[Theorem 4.4.4]{filar1997competitive}.
\end{proof}

\subsection{Oblivious Deterministic Robber\label{sec0302}}

Theorems \ref{prop0301} and \ref{prop0302} are \textquotedblleft
simpler\textquotedblright\ than Theorems \ref{prop04001} and \ref{prop04002},
in the sense that the former do not involve the computation of matrix game
values. We will now show that, when $\sigma_{3}$ is oblivious deterministic,
we can obtain a similar simplification of Theorems \ref{prop04001} and
\ref{prop04002}. Before presenting these results in rigorous form, let us
describe them informally.

\begin{enumerate}
\item Suppose first that a game is played between a single cop and an
oblivious deterministic robber. We will prove that there exists a stationary
Markovian deterministic cop strategy $\overline{\sigma}^{\ast}$ by which the
cop can capture the robber in minimum time.

\item Next consider two cops and an oblivious deterministic robber. We will
prove that the extension of $\overline{\sigma}^{\ast}$ to SCPR is
\emph{optimal} for both cops. More specifically, neither cop loses anything by
using it; and one of the two will capture the robber with probability one.
\end{enumerate}

Let us now formalize the above ideas. We pick any graph $G=\left(  V,E\right)
$ and any oblivious deterministic robber strategy $\overline{\sigma}_{3}$ and
keep these fixed for the remainder of the discussion. Further, let
$\mathcal{S}$ denote the set of all functions $\overline{\sigma}:V\times
V\rightarrow V$ with the restriction that $\forall\left(  x^{1},x^{3}\right)
\in V\times V:\overline{\sigma}\left(  x^{1},x^{3}\right)  \in N\left[
x^{1}\right]  $. In other words, $\mathcal{S}$ is the set of legal stationary
Markovian deterministic cop strategies for the \textquotedblleft
classic\textquotedblright\ CR game of one cop and one robber.

Now pick some $\overline{\sigma}\in\mathcal{S}$ and play the game with
starting positions $X_{0}^{1}=x_{0}^{1}\in V$ (for the cop) and $X_{0}%
^{3}=x_{0}^{3}\in V$ (for the robber). The following sequence (dependent on
$\overline{\sigma}$, $x_{0}^{1}$, $x_{0}^{3}$) of cop and robber positions
will be produced:
\[
X_{0}^{1}=x_{0}^{1},X_{0}^{3}=x_{0}^{3},X_{1}^{1}=\overline{\sigma}\left(
x_{0}^{1},x_{0}^{3}\right)  ,X_{1}^{3}=\overline{\sigma}_{3}\left(  x_{0}%
^{3}\right)  ,...\text{ ;}%
\]
let $T_{\overline{\sigma}}\left(  x_{0}^{1},x_{0}^{3}\right)  $ be the
\emph{capture time}, i.e., the smallest $t$ such that $X_{t}^{1}=X_{t}^{3}$,
for the sequence produced by $\overline{\sigma}$, $x_{0}^{1}$, $x_{0}^{3}$
(and $\overline{\sigma}_{3}$). Also define
\[
\overline{V\times V}=\left\{  \left(  x^{1},x^{3}\right)  :x^{1}\in V,x^{3}\in
V,x^{1}\neq x^{3}\right\}  .
\]
Then we have the following.

\begin{lemma}
\label{prop0303}Given a graph $G=\left(  V,E\right)  $ and an oblivious
deterministic robber strategy $\overline{\sigma}_{3}$. Let
\[
\forall x^{1}\in V:T^{\left(  0\right)  }\left(  x^{1},x^{1}\right)
=0,\quad\forall\left(  x^{1},x^{3}\right)  \in\overline{V\times V}:T^{\left(
0\right)  }\left(  x^{1},x^{3}\right)  =\infty.
\]
Now perform the following iteration for $i=1,2,...$ :%
\begin{align}
\forall x^{1} &  \in V:T^{\left(  i\right)  }\left(  x^{1},x^{1}\right)
=0;\quad\forall\left(  x^{1},x^{3}\right)  \in\overline{V\times V}:T^{\left(
i\right)  }\left(  x^{1},x^{3}\right)  =\min_{x^{\prime}\in N\left[
x^{1}\right]  }\left[  1+T^{\left(  i-1\right)  }\left(  x^{\prime},\sigma
_{3}\left(  x^{3}\right)  \right)  \right]  ,\label{eq03024}\\
\forall x^{1} &  \in V:T^{\left(  i\right)  }\left(  x^{1},x^{1}\right)
=0;\quad\forall\left(  x^{1},x^{3}\right)  \in\overline{V\times V}%
:\overline{\sigma}^{\left(  i\right)  }\left(  x^{1},x^{3}\right)  =\arg
\min_{x^{\prime}\in N\left[  x^{1}\right]  }\left[  1+T^{\left(  i-1\right)
}\left(  x^{\prime},\sigma_{3}\left(  x^{3}\right)  \right)  \right]
.\label{eq03025}%
\end{align}
Then the limits
\[
\lim_{i\rightarrow\infty}\overline{\sigma}^{\left(  i\right)  }\left(
x^{1},x^{3}\right)  ,\qquad\lim_{i\rightarrow\infty}T^{\left(  i\right)
}\left(  x^{1},x^{3}\right)
\]
exist for all$\ \left(  x^{1},x^{3}\right)  \in V\times V$. Furthermore,
letting $\overline{\sigma}^{\ast}\left(  x^{1},x^{3}\right)  =\lim
_{i\rightarrow\infty}\overline{\sigma}^{\left(  i\right)  }\left(  x^{1}%
,x^{3}\right)  $ and $T^{\ast}\left(  x^{1},x^{3}\right)  =\min_{\overline
{\sigma}\in\mathcal{S}}T_{\overline{\sigma}}\left(  x^{1},x^{3}\right)  $, we
have%
\begin{equation}
\forall\left(  x^{1},x^{3}\right)  \in V\times V:\lim_{i\rightarrow\infty
}T^{\left(  i\right)  }\left(  x^{1},x^{3}\right)  =T_{\overline{\sigma}%
^{\ast}}\left(  x^{1},x^{3}\right)  =T^{\ast}\left(  x^{1},x^{3}\right)
.\label{eq03022}%
\end{equation}

\end{lemma}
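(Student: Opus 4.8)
The plan is to read (\ref{eq03024}) as a finite-horizon value iteration for a single-player minimum-capture-time problem: since $\overline{\sigma}_{3}$ is deterministic, the robber's entire future trajectory $x^{3},\overline{\sigma}_{3}(x^{3}),\overline{\sigma}_{3}(\overline{\sigma}_{3}(x^{3})),\dots$ is fixed in advance, so the cop faces a pure optimization rather than a genuine game. I would organize the argument in four steps: convergence of the $T^{(i)}$, the fixed-point property of the limit, convergence of the $\overline{\sigma}^{(i)}$, and identification of the common limit with $T^{\ast}$ and $T_{\overline{\sigma}^{\ast}}$.

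First I would establish monotone convergence of $T^{(i)}$. Each $T^{(i)}(x^{1},x^{3})$ takes values in $\mathbb{N}_{0}\cup\{\infty\}$, and I would show by induction on $i$ that $T^{(i)}\leq T^{(i-1)}$ componentwise: the base case holds because $T^{(0)}=\infty$ off the diagonal, and the inductive step follows because the right-hand side of (\ref{eq03024}) is monotone in its argument $T^{(i-1)}$. A non-increasing sequence of extended non-negative integers must stabilize at each coordinate, so $T^{(\infty)}(x^{1},x^{3}):=\lim_{i}T^{(i)}(x^{1},x^{3})$ exists and is attained after finitely many iterations for each pair. Passing to the limit in (\ref{eq03024}), which is a finite minimum over $N[x^{1}]$, then shows that $T^{(\infty)}$ solves the fixed-point equations $T^{(\infty)}(x^{1},x^{1})=0$ and, for $x^{1}\neq x^{3}$, $T^{(\infty)}(x^{1},x^{3})=\min_{x'\in N[x^{1}]}[1+T^{(\infty)}(x',\overline{\sigma}_{3}(x^{3}))]$. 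I would also record the elementary fact that $T^{(\infty)}(x^{1},x^{3})=0$ if and only if $x^{1}=x^{3}$, since every off-diagonal value is a minimum of quantities that are at least $1$. For the strategies, I would fix once and for all a deterministic tie-breaking rule in the $\arg\min$ of (\ref{eq03025}); because each $T^{(i-1)}(\cdot,\overline{\sigma}_{3}(x^{3}))$ stabilizes over the finite set $N[x^{1}]$, the selected minimizer stabilizes as well, so $\overline{\sigma}^{\ast}(x^{1},x^{3}):=\lim_{i}\overline{\sigma}^{(i)}(x^{1},x^{3})$ exists and is a greedy selector for the fixed-point equation, giving $T^{(\infty)}(x^{1},x^{3})=1+T^{(\infty)}(\overline{\sigma}^{\ast}(x^{1},x^{3}),\overline{\sigma}_{3}(x^{3}))$ whenever $x^{1}\neq x^{3}$ and this value is finite.

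It then remains to identify $T^{(\infty)}$ with both $T_{\overline{\sigma}^{\ast}}$ and $T^{\ast}=\min_{\overline{\sigma}\in\mathcal{S}}T_{\overline{\sigma}}$. For the upper bound $T_{\overline{\sigma}^{\ast}}\leq T^{(\infty)}$ (which gives $T^{\ast}\leq T^{(\infty)}$ since $\overline{\sigma}^{\ast}\in\mathcal{S}$), I would run $\overline{\sigma}^{\ast}$ from $(x^{1},x^{3})$ with $m:=T^{(\infty)}(x^{1},x^{3})$ finite and prove by induction that the state $(X_{t}^{1},X_{t}^{3})$ reached at time $t<m$ satisfies $T^{(\infty)}(X_{t}^{1},X_{t}^{3})=m-t>0$; by the greedy identity the potential drops by exactly one each turn and reaches $0$ precisely at $t=m$, which by the zero-iff-equal fact is the first moment of capture, so $T_{\overline{\sigma}^{\ast}}(x^{1},x^{3})=m$. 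For the matching lower bound $T^{(\infty)}\leq T_{\overline{\sigma}}$ for every $\overline{\sigma}\in\mathcal{S}$, I would induct on $t=T_{\overline{\sigma}}(x^{1},x^{3})$: the case $t=0$ is immediate, and for $t\geq1$, writing $y^{1}=\overline{\sigma}(x^{1},x^{3})\in N[x^{1}]$ and $y^{3}=\overline{\sigma}_{3}(x^{3})$, stationarity of $\overline{\sigma}$ makes its capture time from $(y^{1},y^{3})$ equal to $t-1$, so the inductive hypothesis gives $T^{(t-1)}(y^{1},y^{3})\leq t-1$ and then (\ref{eq03024}) yields $T^{(t)}(x^{1},x^{3})\leq1+T^{(t-1)}(y^{1},\overline{\sigma}_{3}(x^{3}))\leq t$, whence $T^{(\infty)}(x^{1},x^{3})\leq t$. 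Minimizing over $\overline{\sigma}$ gives $T^{(\infty)}\leq T^{\ast}$, and combining with the upper bound proves $T^{\ast}=T^{(\infty)}=T_{\overline{\sigma}^{\ast}}$, which is (\ref{eq03022}). The $\infty$ case is consistent, since then no finite-horizon capture exists, the greedy run never attains potential $0$, and no $\overline{\sigma}$ captures, so all three quantities equal $\infty$; under the standing assumption $c(G)=1$ this case does not in fact arise.

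I expect the main obstacle to be the upper bound of the last step, namely showing that the greedy limit policy $\overline{\sigma}^{\ast}$ genuinely realizes the value $T^{(\infty)}$ rather than merely being consistent with the fixed-point equation. This is where the convergence of the possibly oscillating $\overline{\sigma}^{(i)}$, the fixed-point property, and the zero-iff-captured characterization must be combined to guarantee that the potential strictly decreases by one at every step until capture. The lower bound, by contrast, is a routine unrolling induction whose only essential ingredient is the stationarity of the competing strategy $\overline{\sigma}$.
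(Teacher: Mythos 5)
Your proof is correct, but it takes a genuinely different route from the paper's. The paper first bounds $T^{\ast}\left(  x^{1},x^{3}\right)  <\left\vert V\right\vert $ (the cop walks to the robber's known future position and waits there), and then proves by induction on $n=T^{\ast}\left(  x^{1},x^{3}\right)  $ the statement that $T^{\left(  i\right)  }\left(  x^{1},x^{3}\right)  =T^{\ast}\left(  x^{1},x^{3}\right)  $ for all $i\geq n$; thus the iterates coincide with $T^{\ast}$ identically once $i\geq\left\vert V\right\vert $, convergence is immediate, and (\ref{eq03022}) is obtained by passing to the limit in (\ref{eq03024})--(\ref{eq03025}), with the final equality $T_{\overline{\sigma}^{\ast}}=T^{\ast}$ asserted as clear from the iteration. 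You instead decouple convergence from identification: componentwise monotone decrease of $T^{\left(  i\right)  }$ forces stabilization to a fixed point $T^{\left(  \infty\right)  }$, and two separate inductions then sandwich $T^{\left(  \infty\right)  }$ between $T_{\overline{\sigma}^{\ast}}$ (the potential-decrement argument) and $\min_{\overline{\sigma}\in\mathcal{S}}T_{\overline{\sigma}}$ (the unrolling argument). Your version is somewhat longer, but it makes explicit two points the paper treats lightly: that $\overline{\sigma}^{\left(  i\right)  }$ converges only after a deterministic tie-breaking rule is fixed in the $\arg\min$, and that the greedy limit policy actually realizes $T^{\left(  \infty\right)  }$ --- precisely the step you flag as the crux, and the one the paper dispatches with a single sentence. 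Conversely, the paper's choice to induct directly on $T^{\ast}$ makes the lower bound automatic and shortens the argument. Both are sound instances of the standard dynamic-programming argument the paper invokes; one small remark is that the degenerate $\infty$ case you discuss at the end never occurs on a connected graph for an oblivious deterministic robber (irrespective of $c\left(  G\right)  $), exactly by the paper's wait-for-the-robber observation.
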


\begin{proof}
The proof is based on a standard dynamic programming argument. First note
that, for every $\left(  x^{1},x^{3}\right)  \in V\times V$, $T^{\ast}\left(
x^{1},x^{3}\right)  <\left\vert V\right\vert $. This is true because $C_{1}$
can reach any vertex of $V$ in at most $\left\vert V\right\vert -1$ moves; so
$C_{1}$ can simply go to $X_{\left\vert V\right\vert }^{3}$ (the \emph{known}
location of $R$ at time $t=\left\vert V\right\vert $) and wait for the robber there.

Next we prove by induction that
\begin{equation}
T^{\ast}\left(  x^{1},x^{3}\right)  =n\Rightarrow\left(  \forall i\geq
n:T^{\ast}\left(  x^{1},x^{3}\right)  =T^{\left(  i\right)  }\left(
x^{1},x^{3}\right)  \right)  .\label{eq03023}%
\end{equation}
For $n=0$, $T^{\ast}\left(  x^{1},x^{3}\right)  =0$ implies $x^{1}=x^{3}$ and,
from the algorithm, $T^{\ast}\left(  x^{1},x^{1}\right)  =0=T^{\left(
i\right)  }\left(  x^{1},x^{1}\right)  $ for all $i\in\mathbb{N}_{0}$. Now
suppose that (\ref{eq03023}) holds for $n=1,2,...,k$ and consider the case
$n=k+1$, in which $T^{\ast}\left(  x^{1},x^{3}\right)  =k+1$ is the smallest
number of steps in which $C_{1}$ can reach $R$. This also means that
(i)\ there exists some $x^{\prime}\in N\left[  x^{1}\right]  $ from which
$C_{1}$ can reach $R$ (who now starts at $\overline{\sigma}_{3}\left(
x^{3}\right)  $) in $k$ steps and (ii) there does not exist any $x^{\prime
\prime}\in N\left[  x^{1}\right]  $ from which $C_{1}$ can reach $R$ in $m<k$
steps (because then $C_{1}$ starting at $x^{1}$ could reach $R$ in $m+1<k+1$
steps). In other words%
\[
T^{\ast}\left(  x^{1},x^{3}\right)  =k+1\Rightarrow T^{\ast}\left(
x^{1},x^{3}\right)  =\min_{x^{\prime}\in N\left[  x^{1}\right]  }\left[
1+T^{\left(  k\right)  }\left(  x^{\prime},\overline{\sigma}_{3}\left(
x^{3}\right)  \right)  \right]  =T^{\left(  k+1\right)  }\left(  x^{1}%
,x^{3}\right)  .
\]
It is also easy to check that:%
\[
\forall m\in\mathbb{N}_{0}:T^{\left(  m\right)  }\left(  x^{1},x^{3}\right)
=m\Rightarrow\left(  \forall i>m:T^{\left(  i\right)  }\left(  x^{1}%
,x^{3}\right)  =m\right)  .
\]
Hence the induction has been completed.

Given (\ref{eq03023}), we see immediately that
\[
\forall\ \left(  x^{1},x^{3}\right)  \in V\times V,i\geq\left\vert
V\right\vert :T^{\left(  i\right)  }\left(  x^{1},x^{3}\right)  =T^{\ast
}\left(  x^{1},x^{3}\right)
\]
which implies that both $\lim_{i\rightarrow\infty}T^{\left(  i\right)
}\left(  x^{1},x^{3}\right)  =T^{\ast}\left(  x^{1},x^{3}\right)  $ and
$\lim_{i\rightarrow\infty}\sigma^{\left(  i\right)  }\left(  x^{1}%
,x^{3}\right)  $ exist. Taking the limit (as $i$ tends to $\infty$)\ in
(\ref{eq03024})-(\ref{eq03025}) we get the \emph{optimality equations}%
\begin{align*}
T^{\ast}\left(  x^{1},x^{3}\right)   &  =\min_{x^{\prime}\in N\left[
x^{1}\right]  }\left[  1+T^{\ast}\left(  x^{\prime},\overline{\sigma}%
_{3}\left(  x^{3}\right)  \right)  \right] \\
\overline{\sigma}^{\ast}\left(  x^{1},x^{3}\right)   &  =\arg\min_{x^{\prime
}\in N\left[  x^{1}\right]  }\left[  1+T^{\ast}\left(  x^{\prime}%
,\overline{\sigma}_{3}\left(  x^{3}\right)  \right)  \right]
\end{align*}
hence, it is clear from the iteration (\ref{eq03024})-(\ref{eq03025}) that
$T_{\overline{\sigma}^{\ast}}\left(  x^{1},x^{3}\right)  =T^{\ast}\left(
x^{1},x^{3}\right)  $, for all$\ \left(  x^{1},x^{3}\right)  \in V\times V$.
\end{proof}

Now let us use $\overline{\sigma}^{\ast}$ of Lemma \ref{prop0303} to define
strategies $\overline{\sigma}_{i}^{\ast}$ for $C_{i}$ ($i\in\left\{
1,2\right\}  $) as follows:%
\begin{align*}
\forall\left(  x^{1},x^{2},x^{3}\right)   &  \in\mathbf{S}_{a}:\overline
{\sigma}_{1}^{\ast}\left(  x^{1},x^{2},x^{3}\right)  =\overline{\sigma}^{\ast
}\left(  x^{1},x^{3}\right)  ,\\
\forall\left(  x^{1},x^{2},x^{3}\right)   &  \in\mathbf{S}_{a}:\overline
{\sigma}_{2}^{\ast}\left(  x^{1},x^{2},x^{3}\right)  =\overline{\sigma}^{\ast
}\left(  x^{2},x^{3}\right)  .
\end{align*}
Then the following holds.

\begin{theorem}
\label{prop0304}Given some graph $G=\left(  V,E\right)  $, suppose SCPR is
played on $G$ and the robber is controlled by an oblivious deterministic
strategy $\overline{\sigma}_{3}$. Then $\overline{\sigma}_{i}^{\ast}$ is an
optimal strategy for $C_{i}$ ($i\in\left\{  1,2\right\}  $), for every
starting position $s=\left(  x^{1},x^{2},x^{3}\right)  \in\mathbf{S}_{a}$.
Furthermore
\[
\forall s=\left(  x^{1},x^{2},x^{3}\right)  \in\mathbf{S}_{a}:%
\begin{array}
[c]{lll}%
T_{\overline{\sigma}_{1}^{\ast}}\left(  x^{1},x^{3}\right)  \leq
T_{\overline{\sigma}_{2}^{\ast}}\left(  x^{2},x^{3}\right)  & \Rightarrow &
v\left(  s\right)  =1\\
T_{\overline{\sigma}_{1}^{\ast}}\left(  x^{1},x^{3}\right)  >T_{\overline
{\sigma}_{2}^{\ast}}\left(  x^{2},x^{3}\right)  & \Rightarrow & v\left(
s\right)  =0
\end{array}
.
\]

\end{theorem}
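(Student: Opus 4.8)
The plan is to exploit the fact that, because the robber is oblivious and deterministic, its entire trajectory $r_{0}=x^{3}$, $r_{1}=\overline{\sigma}_{3}(x^{3})$, $r_{2}=\overline{\sigma}_{3}(r_{1}),\ldots$ is fixed in advance and independent of the cops' moves. Consequently each cop is engaged in an independent pursuit of a known moving target, and the two-cop game reduces to a race: $C_{1}$ wins (payoff $1$) exactly when it can reach the robber no later than $C_{2}$ can, ties being resolved in $C_{1}$'s favor by the capture-state convention. I would first record the single-cop lower bound: by the dynamic-programming characterization of $T^{\ast}$ in Lemma \ref{prop0303}, $T^{\ast}(x,x^{3})$ is the least number of steps in which any single cop starting at $x$ can coincide with the robber, so no cop strategy whatsoever---history-dependent, randomized, or exploiting en-passant---can capture the robber before step $T^{\ast}(x,x^{3})$. (En-passant needs a separate remark: a cop can always instead wait at the robber's next vertex, which is available because $N[\cdot]$ is the \emph{closed} neighborhood, so en-passant never beats $T^{\ast}$.) Dually, since $\overline{\sigma}_{i}^{\ast}$ depends only on the $i$-th cop's own location and the robber's location, in the two-cop game cop $i$ following $\overline{\sigma}_{i}^{\ast}$ traces exactly the single-cop optimal trajectory and reaches the robber at step $T_{\overline{\sigma}_{i}^{\ast}}(x^{i},x^{3})=T^{\ast}(x^{i},x^{3})$, provided the game has not already terminated.

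Next I would split into the two cases of the stated dichotomy. In the case $T^{\ast}(x^{1},x^{3})\leq T^{\ast}(x^{2},x^{3})$, suppose $C_{1}$ plays $\overline{\sigma}_{1}^{\ast}$ and $C_{2}$ plays an arbitrary $\sigma_{2}$. By the lower bound, $C_{2}$ cannot capture before step $T^{\ast}(x^{2},x^{3})\geq T^{\ast}(x^{1},x^{3})$, so the game is not terminated before step $T^{\ast}(x^{1},x^{3})$, at which moment $C_{1}$ reaches the robber. If $C_{2}$ has not yet captured, the resulting state has $x^{1}=x^{3}\neq x^{2}$, a $C_{1}$-capture; if $C_{2}$ arrives simultaneously (possible only when $T^{\ast}(x^{1},x^{3})=T^{\ast}(x^{2},x^{3})$), the state then has $x^{1}=x^{2}=x^{3}$, which is still a $C_{1}$-capture by the tie-breaking convention. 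Either way the payoff is $1$, so $\inf_{\sigma_{2}}J(\overline{\sigma}_{1}^{\ast},\sigma_{2}\mid s)=1$; since $J\leq 1$ always, this forces $v(s)=1$.

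In the complementary case $T^{\ast}(x^{1},x^{3})>T^{\ast}(x^{2},x^{3})$, I would symmetrically let $C_{2}$ play $\overline{\sigma}_{2}^{\ast}$ against an arbitrary $\sigma_{1}$. The lower bound gives that $C_{1}$ cannot capture before step $T^{\ast}(x^{1},x^{3})>T^{\ast}(x^{2},x^{3})$, so the game survives to step $T^{\ast}(x^{2},x^{3})$, where $C_{2}$ reaches the robber while $C_{1}$ is still away; this is a $C_{2}$-capture with $x^{1}\neq x^{3}$, hence payoff $0$. Thus $\sup_{\sigma_{1}}J(\sigma_{1},\overline{\sigma}_{2}^{\ast}\mid s)=0$, and since $J\geq 0$ this forces $v(s)=0$.

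Finally I would tie up optimality. In each case the value is an endpoint of $[0,1]$, so the remaining optimality claims are automatic: when $v(s)=1$, every $C_{2}$ strategy satisfies $\sup_{\sigma_{1}}J(\sigma_{1},\sigma_{2}\mid s)=1=v(s)$ (in particular $\overline{\sigma}_{2}^{\ast}$ is optimal for $C_{2}$), and dually when $v(s)=0$ every $C_{1}$ strategy---in particular $\overline{\sigma}_{1}^{\ast}$---is optimal for $C_{1}$. Together with the two forcing arguments, this shows that both $\overline{\sigma}_{1}^{\ast}$ and $\overline{\sigma}_{2}^{\ast}$ are optimal from every starting position and that $v(s)$ takes the stated values. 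I expect the main obstacle to be the bookkeeping around termination and the tie case: one must argue carefully that playing $\overline{\sigma}_{i}^{\ast}$ in the two-cop game genuinely reproduces the single-cop capture step---i.e., that the opponent cannot pre-empt it earlier---and that the en-passant rule does not secretly shorten any capture time. These are precisely the points where the obliviousness and determinism of the robber, together with the ``wait'' option afforded by the closed neighborhood, do the real work.
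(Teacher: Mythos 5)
Your proposal is correct and follows essentially the same route as the paper's own proof: because the oblivious deterministic robber's trajectory is fixed in advance, each cop plays a decoupled pursuit whose fastest capture time is given by Lemma \ref{prop0303}, the game reduces to a race decided by comparing $T_{\overline{\sigma}_{1}^{\ast}}(x^{1},x^{3})$ with $T_{\overline{\sigma}_{2}^{\ast}}(x^{2},x^{3})$, and ties go to $C_{1}$ by the capture-state convention. If anything, your write-up is more careful than the paper's rather informal argument, since you make explicit the lower bound that no opposing strategy (including en-passant moves) can pre-empt the capture before the optimal single-cop time, and you note that at states where the value is an endpoint of $[0,1]$ the optimality of the losing player's strategy is automatic.
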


\begin{proof}
The key fact is this: when $\overline{\sigma}_{3}$ is oblivious deterministic,
the players $C_{1}$ and $C_{2}$ interact only at the last phase of the game,
when $R$ is captured. In effect each cop plays a \textquotedblleft
decoupled\textquotedblright\ classic CR\ game, in which $\overline{\sigma
}^{\ast}$ of Lemma \ref{prop0303} guarantees capture in minimum time. Of
course in the full SCPR game there is always the possibility that the other
cop can capture $R$ at an earlier time. Hence the best $C_{i}$ can do is to
attempt to capture $R$ at the earliest possible time and an optimal strategy
to this end is $\overline{\sigma}_{i}^{\ast}$; he has no incentive to deviate
from $\overline{\sigma}_{i}^{\ast}$ (by using another deterministic or
probabilistic strategy) because this can never reduce his projected capture
time. Hence $\overline{\sigma}_{i}^{\ast}$ is optimal for $C_{1}$. Since
$\overline{\sigma}_{1}^{\ast}$, $\overline{\sigma}_{2}^{\ast}$ and
$\overline{\sigma}_{3}$ are deterministic, the outcome of the game is also
deterministic. In particular, when $T_{\overline{\sigma}_{1}^{\ast}}\left(
x^{1},x^{3}\right)  \leq T_{\overline{\sigma}_{2}^{\ast}}\left(  x^{2}%
,x^{3}\right)  $, with probability 1 $C_{1}$ reaches $R$ before or at the same
time as $C_{2}$; hence $v\left(  s\right)  =1$; when $T_{\overline{\sigma}%
_{1}^{\ast}}\left(  x^{1},x^{3}\right)  >T_{\overline{\sigma}_{2}^{\ast}%
}\left(  x^{2},x^{3}\right)  $, $C_{2}$ reaches $R$ before $C_{1}$ with
probability 1; hence $v\left(  s\right)  =0$.
\end{proof}

The next theorem gives an additional characterization of the value $v\left(
s\right)  $. In the statement of the theorem we will use the following
notation:\ suppose the game is in the state $s$, $C_{1}$ plays $a^{1}$,
$C_{2}$ plays $a^{2}$ and $R$ plays the (predetermined)\ move $\overline
{\sigma}_{3}\left(  s\right)  $; then we denote the next game state by
$\widehat{\mathbf{T}}\left(  s,\left(  a^{1},a^{2},\overline{\sigma}%
_{3}\left(  s\right)  \right)  \right)  $. We have the following.

\begin{theorem}
\label{prop0305}Given some graph $G=\left(  V,E\right)  $, suppose SCPR is
played on $G$ and the robber is controlled by an oblivious deterministic
strategy $\overline{\sigma}_{3}$. Then, $\forall s\in\mathbf{S}_{a}$, we have%
\begin{equation}
v\left(  s\right)  =\max_{a^{1}}\min_{a^{2}}\left[  q\left(  s\right)
+v\left(  \widehat{\mathbf{T}}\left(  s,\left(  a^{1},a^{2},\overline{\sigma
}_{3}\left(  s\right)  \right)  \right)  \right)  \right]  =\min_{a^{2}}%
\max_{a^{1}}\left[  q\left(  s\right)  +v\left(  \widehat{\mathbf{T}}\left(
s,\left(  a^{1},a^{2},\overline{\sigma}_{3}\left(  s\right)  \right)  \right)
\right)  \right]  .\label{eq03031}%
\end{equation}

\end{theorem}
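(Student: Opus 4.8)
The plan is to derive (\ref{eq03031}) from the matrix-game optimality equation (\ref{eq03012}) of Theorem \ref{prop04001} by showing that, for an oblivious deterministic robber, the one-shot game solved at each $s$ admits a saddle point in \emph{pure} strategies. Since $\overline{\sigma}_{3}$ is deterministic and oblivious, from $s$ the robber moves to the fixed vertex $w:=\overline{\sigma}_{3}(s)=\overline{\sigma}_{3}(x^{3})$, so the transition becomes deterministic in the cop actions and (\ref{eq03012}) collapses to $v(s)=\mathbf{Val}[\Gamma]$ with $\Gamma(a^{1},a^{2})=q(s)+v(\widehat{\mathbf{T}}(s,(a^{1},a^{2},w)))$. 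For any matrix game $\max_{a^{1}}\min_{a^{2}}\Gamma\le\mathbf{Val}[\Gamma]\le\min_{a^{2}}\max_{a^{1}}\Gamma$, so it is enough to produce a pure saddle point of $\Gamma$: this forces the two iterated extrema to coincide with $\mathbf{Val}[\Gamma]=v(s)$, which is exactly (\ref{eq03031}).

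The structural observation I would exploit is that, for a non-capture $s$, $\Gamma$ is a \emph{separable comparison game}. Concretely, define $f(a^{1})$ to be the number of additional moves $C_{1}$ needs to reach $R$ if its first move is $a^{1}$ and thereafter it pursues optimally (with $R$ following $\overline{\sigma}_{3}$), and define $g(a^{2})$ analogously for $C_{2}$; crucially $f$ depends only on $a^{1}$ and $g$ only on $a^{2}$, because with a known deterministic robber the two pursuits are decoupled. Applying Theorem \ref{prop0304} to the successor state $\widehat{\mathbf{T}}(s,(a^{1},a^{2},w))$ then yields $\Gamma(a^{1},a^{2})=\mathbf{1}[\,f(a^{1})\le g(a^{2})\,]$, the tie being awarded to $C_{1}$. (When $s$ is itself a capture state the successor is $\tau$ for all actions, so $\Gamma\equiv q(s)$ and both iterated extrema equal $q(s)=v(s)$ trivially.)

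For any payoff of the form $\Gamma(a^{1},a^{2})=\mathbf{1}[\,f(a^{1})\le g(a^{2})\,]$ the individual minimizers $a^{1\ast}\in\arg\min_{a^{1}}f$ and $a^{2\ast}\in\arg\min_{a^{2}}g$ constitute a pure saddle point with value $\mathbf{1}[f^{\ast}\le g^{\ast}]$, where $f^{\ast}=\min f$ and $g^{\ast}=\min g$: indeed $\max_{a^{1}}\Gamma(a^{1},a^{2\ast})=\mathbf{1}[f^{\ast}\le g^{\ast}]=\min_{a^{2}}\Gamma(a^{1\ast},a^{2})$, as a one-line check on the indicator shows. It then remains to identify $f^{\ast},g^{\ast}$: the optimality equation for $T^{\ast}$ in Lemma \ref{prop0303} gives $f^{\ast}=\min_{a^{1}\in N[x^{1}]}T^{\ast}(a^{1},w)=T^{\ast}(x^{1},x^{3})-1$, attained at $a^{1\ast}=\overline{\sigma}^{\ast}(x^{1},x^{3})$, and symmetrically $g^{\ast}=T^{\ast}(x^{2},x^{3})-1$, attained at $a^{2\ast}=\overline{\sigma}^{\ast}(x^{2},x^{3})$. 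Hence the common saddle value is $\mathbf{1}[\,T^{\ast}(x^{1},x^{3})\le T^{\ast}(x^{2},x^{3})\,]$, which is precisely $v(s)$ by Theorem \ref{prop0304}; this closes the argument.

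The step I expect to be delicate is establishing the indicator form $\Gamma=\mathbf{1}[\,f(a^{1})\le g(a^{2})\,]$, where the en-passant rule must be handled. The concern is twofold: first, whether $f$ and $g$ remain functions of $a^{1}$ and $a^{2}$ \emph{separately} even though en-passant sweeps the robber to a cop's destination; this holds because whether $C_{i}$ triggers an en-passant depends only on its own move $a^{i}$ (and on $s,\overline{\sigma}_{3}$). Second, the simultaneous-capture tie (both cops reach $R$ in one move, i.e.\ $T^{\ast}(x^{1},x^{3})=T^{\ast}(x^{2},x^{3})=1$) must be resolved in $C_{1}$'s favour, consistently with the $C_{1}$-favoured definition of capture states and with Theorem \ref{prop0304}. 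I would also check that $f^{\ast}=T^{\ast}(x^{1},x^{3})-1$ survives the possibility of a one-move en-passant capture, which it does since such a capture forces $T^{\ast}(x^{1},x^{3})=1$ (the cop can equally well stay put and meet the robber as it steps onto the cop's vertex) and hence $f^{\ast}=0$. Once this en-passant and tie-breaking bookkeeping is in place, the separable-comparison structure makes the saddle point, and therefore (\ref{eq03031}), immediate.
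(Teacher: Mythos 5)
Your argument is correct and reaches the same conclusion, but it is organized quite differently from the paper's proof. The paper also starts by collapsing the optimality equation of Theorem \ref{prop04001} to $v(s)=\mathbf{Val}\left[q(s)+v(\widehat{\mathbf{T}}(s,(a^{1},a^{2},\overline{\sigma}_{3}(s))))\right]$, but from there it only uses the two coarse facts supplied by Theorem \ref{prop0304} --- that $v(s)\in\{0,1\}$ and that the deterministic pair $(\overline{\sigma}_{1}^{\ast},\overline{\sigma}_{2}^{\ast})$ attains $v(s)$ --- and then runs a two-case contradiction argument ($v(s)=1$ and $v(s)=0$ separately) to show that each iterated pure extremum equals $v(s)$. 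You instead identify the full structure of the one-shot matrix: using Theorem \ref{prop0304} at the successor states you write $\Gamma(a^{1},a^{2})$ as the indicator of $f(a^{1})\le g(a^{2})$ with $f,g$ the decoupled residual capture times, and you exhibit an explicit pure saddle point at the componentwise minimizers, which yields both equalities in (\ref{eq03031}) at once and re-derives the value formula $v(s)=\mathbf{1}\left[T^{\ast}(x^{1},x^{3})\le T^{\ast}(x^{2},x^{3})\right]$ as the saddle value. Your route is more constructive (it names the optimal pure actions and handles $\max\min$ and $\min\max$ symmetrically rather than ``similarly and omitted''), at the price of having to justify the separable indicator form entry by entry; the paper's route avoids that bookkeeping but proves less about the matrix.

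One caveat on the step you yourself flag as delicate: the indicator form can fail on isolated entries. If $C_{2}$ performs an en-passant sweep (so the robber ends at $x^{3}$ rather than at $\overline{\sigma}_{3}(x^{3})$) while $C_{1}$ independently moves to $a^{1}=x^{3}$, the successor is $(x^{3},x^{3},x^{3})$, a $C_{1}$-capture by the paper's tie-breaking convention, even though the decoupled computation gives $f(a^{1})\ge 1$ and $g(a^{2})=0$; so $\Gamma$ is $1$ where your formula says $0$, and separability is not literally entrywise true. This does not sink the theorem --- in that configuration $C_{2}$ also has the non-sweeping capturing move $a^{2}=\overline{\sigma}_{3}(x^{3})$, which restores both iterated extrema to the claimed value --- but your proof as written asserts the indicator form for all entries, so you should either verify it with the sweep rule folded into the definition of the successor state, or note that the saddle-point conclusion only needs the minimizing row and column, which are unaffected. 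The paper's proof, relying only on the optimal pair from Theorem \ref{prop0304}, sidesteps this entirely (though Theorem \ref{prop0304} itself is argued at a comparable level of informality).
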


\begin{proof}
Since $\overline{\sigma}_{3}$ is deterministic, $\Pr\left(  \widehat
{\mathbf{T}}\left(  s,\left(  a^{1},a^{2},\overline{\sigma}_{3}\left(
s\right)  \right)  \right)  |s,a^{1},a^{2}\right)  =1$. Hence, by
\cite[Theorem 4.4.3]{filar1997competitive}:%
\[
v\left(  s\right)  =\mathbf{Val}\left[  q\left(  s\right)  +\sum_{s^{\prime
}\in S}\Pr\left(  s^{\prime}|s,a_{1},a_{2}\right)  v\left(  s^{\prime}\right)
\right]  =\mathbf{Val}\left[  q\left(  s\right)  +v\left(  \widehat
{\mathbf{T}}\left(  s,\left(  a^{1},a^{2},\overline{\sigma}_{3}\left(
s\right)  \right)  \right)  \right)  \right]  .
\]
Since $\overline{\sigma}_{1}^{\ast}$ and $\overline{\sigma}_{2}^{\ast}$ are
also deterministic, at every turn of the game they produce an action with
probability one. Hence there exist actions $\overline{a}^{1}=\overline{\sigma
}_{1}^{\ast}\left(  s\right)  $, $\overline{a}^{2}=\overline{\sigma}_{2}%
^{\ast}\left(  s\right)  $ such that%
\[
v\left(  s\right)  =q\left(  s\right)  +v\left(  \widehat{\mathbf{T}}\left(
s,\left(  \overline{a}^{1},\overline{a}^{2},\overline{\sigma}_{3}\left(
s\right)  \right)  \right)  \right)  .
\]
Since, from Theorem \ref{prop0304}, $v\left(  s\right)  \in\left\{
0,1\right\}  $, we consider two cases.

\begin{enumerate}
\item Suppose $v\left(  s\right)  =1$. This means, that starting at $s$,
$C_{1}$ will certainly capture $R$.

\begin{enumerate}
\item If $s$ is a $C_{1}$-capture state, then $q\left(  s\right)  =1$ and, for
any actions $\overline{a}^{1},\overline{a}^{2}$, $\widehat{\mathbf{T}}\left(
s,\left(  \overline{a}^{1},\overline{a}^{2},\overline{\sigma}_{3}\left(
s\right)  \right)  \right)  =\tau$, in which case
\[
v\left(  \widehat{\mathbf{T}}\left(  s,\left(  \overline{a}^{1},\overline
{a}^{2},\overline{\sigma}_{3}\left(  s\right)  \right)  \right)  \right)
=v\left(  \tau\right)  =0.
\]
Hence $v\left(  s\right)  $= $\max_{a^{1}}\min_{a^{2}}\left[  q\left(
s\right)  +v\left(  \widehat{\mathbf{T}}\left(  s,\left(  a^{1},a^{2}%
,\overline{\sigma}_{3}\left(  s\right)  \right)  \right)  \right)  \right]
=1$.

\item If $s$ is not a $C_{1}$-capture state, then $q\left(  s\right)  =0$ and
$v\left(  \widehat{\mathbf{T}}\left(  s,\left(  \overline{a}^{1},\overline
{a}^{2},\overline{\sigma}_{3}\left(  s\right)  \right)  \right)  \right)  =1$.
Suppose there existed some $\widehat{a}^{2}$ such that $v\left(
\widehat{\mathbf{T}}\left(  s,\left(  \overline{a}^{1},\widehat{a}%
^{2},\overline{\sigma}_{3}\left(  s\right)  \right)  \right)  \right)  =0$.
This would mean that, starting at $\widehat{\mathbf{T}}\left(  s,\left(
\overline{a}^{1},\widehat{a}^{2},\overline{\sigma}_{3}\left(  s\right)
\right)  \right)  $, $C_{2}$ would certainly capture $R$ before $C_{1}$ and,
since $\overline{a}^{1}$ is the optimal (fastest capturing) move for $C_{1}$,
we would also have
\[
\forall a^{1}\in\mathbf{A}_{1}\left(  s\right)  :q\left(  s\right)  +v\left(
\widehat{\mathbf{T}}\left(  s,\left(  a^{1},\widehat{a}^{2},\overline{\sigma
}_{3}\left(  s\right)  \right)  \right)  \right)  =0.
\]
But then $v\left(  s\right)  =\mathbf{Val}\left[  q\left(  s\right)  +v\left(
\widehat{\mathbf{T}}\left(  s,\left(  a^{1},a^{2},\overline{\sigma}_{3}\left(
s\right)  \right)  \right)  \right)  \right]  =0$, contrary to the assumption.
So we must instead have
\[
\forall a^{2}\in\mathbf{A}_{2}\left(  s\right)  :q\left(  s\right)  +v\left(
\widehat{\mathbf{T}}\left(  s,\left(  \overline{a}^{1},a^{2},\overline{\sigma
}_{3}\left(  s\right)  \right)  \right)  \right)  =1
\]
which implies $v\left(  s\right)  $= $\max_{a^{1}}\min_{a^{2}}\left[  q\left(
s\right)  +v\left(  \widehat{\mathbf{T}}\left(  s,\left(  a^{1},a^{2}%
,\overline{\sigma}_{3}\left(  s\right)  \right)  \right)  \right)  \right]
=1$.
\end{enumerate}

\item Now suppose $v\left(  s\right)  =0$. Then $s$ is not a $C_{1}$-capture
state, i.e., $q\left(  s\right)  =0$. Now, we will show that
\begin{equation}
\forall a^{1}\in\mathbf{A}_{1}\left(  s\right)  :\exists a^{2}\in
\mathbf{A}_{2}\left(  s\right)  :v\left(  \widehat{\mathbf{T}}\left(
s,\left(  a^{1},a^{2},\overline{\sigma}_{3}\left(  s\right)  \right)  \right)
\right)  =0. \label{eq03026}%
\end{equation}
If this is not the case, then we must have
\[
\exists\widetilde{a}^{1}\in\mathbf{A}_{1}\left(  s\right)  :\forall a^{2}%
\in\mathbf{A}_{2}\left(  s\right)  :v\left(  \widehat{\mathbf{T}}\left(
s,\left(  \widetilde{a}^{1},a^{2},\overline{\sigma}_{3}\left(  s\right)
\right)  \right)  \right)  =1.
\]
Then $C_{1}$ will certainly capture $R$ (before $C_{2}$) starting from the
game position $\mathbf{T}\left(  s,\left(  \widetilde{a}^{1},a^{2}%
,\overline{\sigma}_{3}\left(  s\right)  \right)  \right)  $ and this will be
true for any $a^{2}\in\mathbf{A}_{2}\left(  s\right)  $. But this means that
$C_{1}$, starting from game position $s$ and playing $\widetilde{a}^{1}$, will
certainly capture $R$ before $C_{2}$; which in turn means $v\left(  s\right)
=1$, contrary to the hypothesis. Hence (\ref{eq03026})\ holds and this implies%
\begin{align*}
\forall a^{1}  &  \in\mathbf{A}_{1}\left(  s\right)  :\min_{a^{2}}v\left(
\widehat{\mathbf{T}}\left(  s,\left(  a^{1},a^{2},\overline{\sigma}_{3}\left(
s\right)  \right)  \right)  \right)  =0\\
&  \Rightarrow\max_{a^{1}}\min_{a^{2}}v\left(  \widehat{\mathbf{T}}\left(
s,\left(  a^{1},a^{2},\overline{\sigma}_{3}\left(  s\right)  \right)  \right)
\right)  =0.
\end{align*}

\end{enumerate}

\noindent Hence we have proved the first part of (\ref{eq03031}). The proof of
the second part is similar and omitted.
\end{proof}

\begin{remark}
\normalfont It must be emphasized that Theorem \ref{prop0304} and Theorem
\ref{prop0305} do \emph{not} hold for deterministic \emph{non}-oblivious
strategies $\overline{\sigma}_{3}$. This can be seen by the following
counterexample. Suppose that concurrent SCPR\ is played on the graph of Figure
1, starting from the state $\left(  2,6,1\right)  $.

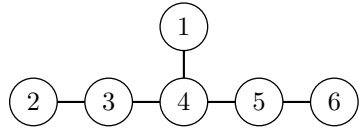
\begin{figure}[H]
\begin{center}
\begin{tikzpicture}
\SetGraphUnit{2}
\Vertex[x= 0,y=1]{1}
\Vertex[x=-2,y=0]{2}
\Vertex[x=-1,y=0]{3}
\Vertex[x= 0,y=0]{4}
\Vertex[x= 1,y=0]{5}
\Vertex[x= 2,y=0]{6}
\Edge(1)(4)
\Edge(2)(3)
\Edge(3)(4)
\Edge(4)(5)
\Edge(5)(6)
\SetVertexNoLabel
\end{tikzpicture}
\end{center}
\par
\label{fig01}\caption{An example where deterministic robber strategy results
in randomized optimal cop strategies.}%
\end{figure}\noindent Furthermore, the robber is controlled by the
$\overline{\sigma}_{3}$ which is (partially)\ described in the following table.

\begin{table}[H]
\begin{center}%
\begin{tabular}
[c]{|c|c|}\hline
$\left(  x_{t}^{1},x_{t}^{2},x_{t}^{3}\right)  $ & $x_{t+1}^{3}=\overline
{\sigma}_{3}\left(  x_{t}^{1},x_{t}^{2},x_{t}^{3}\right)  $\\\hline
$\left(  2,6,1\right)  $ & 4\\\hline
$\left(  2,6,4\right)  $ & 3\\\hline
$\left(  2,5,4\right)  $ & 5\\\hline
$\left(  3,6,4\right)  $ & 5\\\hline
$\left(  3,5,4\right)  $ & 3\\\hline
\end{tabular}
\end{center}
\caption{A part of the robber strategy $\sigma_{3}$}%
\end{table}

\noindent For every game state not listed above the robber stays in place,
i.e., $x_{t+1}^{3}=\overline{\sigma}_{3}\left(  x_{t}^{1},x_{t}^{2},x_{t}%
^{3}\right)  =x_{t}^{3}$. Now consider what the first moves of $C_{1}$ and
$C_{2}$ should be. They know that $R$ will move into vertex 4; $C_{1}$ can
either stay at 2 or move into 3; $C_{2}$ can either stay at 6 or move into 5.
After the first move is completed, the possible game states are the following.

\begin{table}[H]
\begin{center}%
\begin{tabular}
[c]{|c|c|c|c|c|}\hline
$s_{0}=\left(  2,6,1\right)  $ & $a_{1}^{1}=2$ & $a_{1}^{2}=6$ & $a_{1}%
^{3}=\overline{\sigma}_{3}\left(  2,6,1\right)  =4$ & $s_{1}=\left(
2,6,4\right)  $\\\hline
$s_{0}=\left(  2,6,1\right)  $ & $a_{1}^{1}=2$ & $a_{1}^{2}=5$ & $a_{1}%
^{3}=\overline{\sigma}_{3}\left(  2,6,1\right)  =4$ & $s_{1}=\left(
2,5,4\right)  $\\\hline
$s_{0}=\left(  2,6,1\right)  $ & $a_{1}^{1}=3$ & $a_{1}^{2}=6$ & $a_{1}%
^{3}=\overline{\sigma}_{3}\left(  2,6,1\right)  =4$ & $s_{1}=\left(
3,6,4\right)  $\\\hline
$s_{0}=\left(  2,6,1\right)  $ & $a_{1}^{1}=3$ & $a_{1}^{2}=5$ & $a_{1}%
^{3}=\overline{\sigma}_{3}\left(  2,6,1\right)  =4$ & $s_{1}=\left(
3,5,4\right)  $\\\hline
\end{tabular}
\end{center}
\caption{Possible states at the end of the first turn.}%
\end{table}

\noindent It is easy to check (from the respective $\overline{\sigma}_{3}$
values)\ that for $s_{1}=\left(  2,6,4\right)  $ and $s_{1}=\left(
3,5,4\right)  $ the capturing cop is $C_{1}$, while for $s_{1}=\left(
2,5,4\right)  $ and $s_{1}=\left(  3,6,4\right)  $ the capturing cop is
$C_{2}$. Hence the game can be written out as the following (one-turn)\ matrix game

\begin{table}[H]
\begin{center}%
\begin{tabular}
[c]{|c|c|c|}\hline
& $a^{2}=6$ & $a^{2}=5$\\\hline
$a^{1}=2$ & 1 & 0\\\hline
$a^{1}=3$ & 0 & 1\\\hline
\end{tabular}
\end{center}
\caption{The one-turn matrix game equivalent to the original stochastoc game.}%
\end{table}

\noindent It is easy to compute, using standard methods, that the optimal
strategies for this game. $C_{1}$ must use $\Pr\left(  a^{1}=2\right)
=\Pr\left(  a^{1}=3\right)  =\frac{1}{2}$ and $C_{2}$ must use $\Pr\left(
a^{2}=6\right)  =\Pr\left(  a^{2}=5\right)  =\frac{1}{2}$. This implies that
the optimal strategies $\overline{\sigma}_{1}^{\ast}$ and $\overline{\sigma
}_{2}^{\ast}$ are randomized, despite the fact that $\overline{\sigma}_{3}$ is
deterministic (but not oblivious). Many similar examples can be constructed.
The important point is this:\ when $\overline{\sigma}_{3}$ is not oblivious,
$C_{1}$ (resp. $C_{2}$)\ moves can influence future $R$ moves and (since moves
are performed simultaneously) this influence cannot be predicted by $C_{2}$
(resp. $C_{1}$).
\end{remark}

\section{Related Work\label{sec05}}

In this section we present work which is related to both the SCPR and other
variants of the CR game.

We have already mentioned that the interested reader can find useful
references to the CR\ literature in the book \cite{nowakowskibonato} by
Nowakowski and Bonato. The CR\ literature is mainly oriented to graph
theoretic and combinatorial considerations. Indeed CR\ can be seen as a
\emph{combinatorial game}. On the topic of combinatorial games, the reader can
consult the introductory text \cite{albert} as well as the classic book (in
four volumes) \cite{berlekamp} by Berlekamp and Conway. We also find
interesting combinatorial generalizations of the CR\ game in the papers
\cite{bonato01,bonato02} by A. Bonato and G. MacGillivray.

We believe that \textquotedblleft classic\textquotedblright\ game theory
offers a natural (but not often used in the \textquotedblleft
mainstream\textquotedblright\ CR literature)\ framework for the analysis of
CR\ games. In particular, as already seen, we consider SCPR as a
\emph{stochastic game}. Stochastic games were introduced by Shapley
\cite{shapley1953stochastic}. A classic book on the subject is
\cite{filar1997competitive}, which also contains a rich bibliography; see also
\cite{mertens1992stochastic}.

A type of stochastic games which are especially related to CR\ games are
\emph{recursive games}, in which whenever a non-zero-payoff is received the
play immediately moves to an absorbing state. Recursive games were introduced
by Everett \cite{everett}. It is obvious that SCPR is a recursive game; while
we have not used results from the recursive game theory in the current paper,
we believe this may turn out to be a fruitful connection.

Let us now mention a construction which has been used in several
\textquotedblleft classic\textquotedblright\ CR papers
\cite{bonato01,bonato02,hahn2006game}. Suppose that a \textquotedblleft
classic\textquotedblright\ CR\ game is played between one cop and one robber
on the undirected graph $G=\left(  V,E\right)  $. We now construct the
\emph{game digraph} $D=\left(  S,A\right)  $, where the vertex set is
$S=V\times V \times\{1,2\}$ and the arc set $A$ encodes possible
vertex-to-vertex transitions. Then a play of the CR\ game can be understood as
a walk on $D$; the cop wins if he can force the walk to pass through a vertex
of the form $\left(  x,x,i\right)  $. Hence CR can be seen as a game in which
the two players push a token along the arcs of the digraph. As pointed out in
\cite{bonato01,bonato02} many CR\ variants and several other pursuit games on
graphs (including the concurrent CR game) can be formulated in a similar manner.

It turns out that such \textquotedblleft digraph games\textquotedblright\ have
been studied by several researchers and the related literature is spread among
many communities. The earliest such works of which we are aware is
\cite{berarducci,macnaughton}. Other early examples of this iterature are the
papers \cite{baston1993,ehrenfeucht,washburn}. But probably the most
widespread application of this point of view is in the literature of
reachability games \cite{berwangergraph} and, more generally, $\omega
$\emph{-regular games} \cite{mazala2002infinite}. In a reachability game two
players take turns moving a token along the arcs of a digraph; player 1 wants
to place the token on one of the nodes of a subset of the digraph vertices
while player 2 wants to avoid this event. In addition to \textquotedblleft
classic\textquotedblright\ sequential rechability games, many other variants
have been studied, e.g., stochastic \cite{chatterjee}, concurrent
\cite{alfaro2007concurrent}, $n$-player \cite{chatterjeeNplayer} etc. The
connection to CR\ games is obvious; it seems likely that the voluminous
literature on reachability games contains results of interest to CR\ researchers.

\section{Conclusion\label{sec06}}

We have introduced the game of selfish cops and passive robber
(SCPR\ game)\ and established its basic properties, namely the existence of
value and optimal strategies for both the sequential and concurrent variants;
we have also provided algorithms for the computation of the aforementioned
quantities. In the current paper we have examined \emph{qualitative} variants
of the game, i.e., these in which the goal of the cops is simply to capture
the robber. In a forthcoming paper we will examine \emph{quantitative}
variants, in which the goal is to capture the robber \emph{in the shortest
possible time}.

Several additional issues merit further study and will be the subject of our
future research. We have formulated SCPR\ as a zero-sum game; but reasonable
formulations as a \emph{non-zero-sum} game are also possible and we conjecture
that these may lead to qualitatively different results. In addition, if we
remove the assumption that the robber is passive and deal instead with the
situation of two selfish robbers and a robber \emph{actively trying to avoid
capture}, we are left with a \emph{three}-player game, which we intend to
study in the future.


\begin{thebibliography}{99}                                                                                               %


\bibitem {albert}M.H. Albert, R, Nowakowski and D. Wolfe (2007). \emph{Lessons
in Play: An Introduction to Combinatorial Game Theory}.

\bibitem {alfaro2007concurrent}L. de Alfaro, T. A. Henzinger and O. Kupferman.
\textquotedblleft Concurrent reachability games\textquotedblright.
\emph{Theoretical Computer Science,} vol.386 (2007), pp. 188-217.

\bibitem {baston1993}V. J. Baston and F. A. Bostock. \textquotedblleft
Infinite deterministic graphical games\textquotedblright. \emph{SIAM journal
on control and optimization} vol.31 (1993). pp.1623-1629.

\bibitem {berarducci}A. Berarducci and B. Intrigila. \textquotedblleft On the
cop number of a graph\textquotedblright. \emph{Advances in Applied
Mathematics}, vol. 14 (1993), pp. 389-403.

\bibitem {berlekamp}E. Berlekamp, J. H. Conway and R. Guy (1982).
\emph{Winning Ways for your Mathematical Plays}.

\bibitem {berwangergraph}D. Berwanger, Graph games with perfect information,
\emph{preprint}.

\bibitem {bonato01}A. Bonato and G. MacGillivray. \textquotedblleft A general
framework for discrete - time pursuit games\textquotedblright, \emph{preprint}.

\bibitem {bonato02}A. Bonato and G. MacGillivray. \textquotedblleft
Characterizations and algorithms for generalized Cops and Robbers
games\textquotedblright, accepted to \emph{Contributions to Discrete
Mathematics} (2016).

\bibitem {chatterjee}K. Chatterjee and T.A. Henzinger. \textquotedblleft A
survey of stochastic $\omega$-regular games\textquotedblright. \emph{Journal
of Computer and System Sciences}, vol.78 (2012), pp. 394-413.

\bibitem {chatterjeeNplayer}K. Chatterjee, R. and M. Jurdzi\'{n}ski.
\textquotedblleft On Nash equilibria in stochastic games.\textquotedblright%
\ \emph{International Workshop on Computer Science Logic}. Springer Berlin
Heidelberg, 2004.

\bibitem {ehrenfeucht}A. Ehrenfeucht and J. Mycielski. \textquotedblleft
Positional strategies for mean payoff games\textquotedblright.
\emph{International Journal of Game Theory}, vol.8 (1979), pp. 109-113.

\bibitem {everett}H. Everett. \textquotedblleft Recursive
games\textquotedblright. \emph{Contributions to the Theory of Games}, vol.3
(1957), pp. 47-78.

\bibitem {filar1997competitive}J. Filar K. Vrieze. \emph{Competitive Markov
decision processes}. Springer Science \& Business Media, 1997.

\bibitem {hahn2006game}G. Hahn and G. MacGillivray, \textquotedblleft A note
on $k$-cop, $l$-robber games on graphs\textquotedblright. \emph{Discrete
Mathematics}, vol.306 (2006), pp.2492--2497.

\bibitem {kehagias2016}Ath. Kehagias and G. Konstantinidis. \textquotedblleft
Simultaneously moving cops and robbers\textquotedblright. \emph{Theoretical
Computer Science}, vol. 645 (2016), pp.48-59.

\bibitem {macnaughton}R. McNaughton. \textquotedblleft Infinite games played
on finite graphs\textquotedblright. \emph{Annals of Pure and Applied
Logic}\textbf{,} vol. 65 (1993), pp. 149-184.

\bibitem {mazala2002infinite}R. Mazala, \textquotedblleft Infinite games." In
\emph{Automata logics, and infinite games} (2002), pp. 23-38.

\bibitem {mertens1992stochastic}J.-F. Mertens. \textquotedblleft Stochastic
games\textquotedblright. \emph{Handbook of game theory with economic
applications}, vol.3 (2002), pp. 1809-1832.

\bibitem {nowakowski1983vertex}R. Nowakowski and P. Winkler. \textquotedblleft
Vertex to vertex pursuit in a graph\textquotedblright. \emph{Discrete
Mathematics}, vol. 43 (1983), pp. 230--239.

\bibitem {nowakowskibonato}R. Nowakowski and A. Bonato, \emph{The Game of Cops
and Robbers on Graphs}, AMS, 2011.

\bibitem {quilliot1978jeux}A. Quilliot, \emph{Jeux et pointes fixes sur les
graphes}, Ph.D. Dissertation, Universite de Paris VI, 1978.

\bibitem {osborne1994game}M.J. Osborne and A. Rubinstein. \emph{A Course in
Game Theory}. MIT Press, 1994.

\bibitem {shapley1953stochastic}L.S. Shapley. \textquotedblleft Stochastic
games\textquotedblright. \emph{Proceedings of the National Academy of Sciences
of the United States of America,} vol.39 (1953), pp. 1095-1100.

\bibitem {washburn}A. Washburn. \textquotedblleft Deterministic graphical
games\textquotedblright. \emph{Journal of Mathematical Analysis and
Applications}\textbf{,} vol.153 (1990), pp. 84-96.
\end{thebibliography}
\end{document}